\documentclass[11pt,oneside]{article}
\usepackage{color}
\usepackage{psfrag}
\usepackage{epsfig}
\usepackage{graphicx}
\usepackage{amsmath}
\usepackage{amssymb}
\usepackage{amsthm}
\usepackage{cite}
\usepackage[latin1]{inputenc}
\usepackage{cancel}
\usepackage{color}
\usepackage{caption}

\usepackage{float}
\restylefloat{table}

\newtheorem{theorem}{Theorem}[section]
\newtheorem{definition}[theorem]{Definition}

\newtheorem{proposition}[theorem]{Proposition}

\newtheorem{corollary}[theorem]{Corollary}

\textwidth 165mm \textheight 17cm \evensidemargin -0cm
\oddsidemargin 0cm

\pagenumbering{arabic}

\begin{document}

\title{PIR codes from combinatorial structures}
\date{}
\author{Massimo Giulietti, Arianna Sabatini, and Marco Timpanella}

\maketitle

\begin{abstract}
A $k$-server Private Information Retrieval (PIR) code is a binary linear $[m,s]$-code admitting a generator matrix such that for every integer $i$ with $1\le i\le s$ there exist $k$ disjoint subsets of columns (called recovery sets) that add up to the vector of weight one, with the single $1$ in position $i$. As shown in \cite{Fazeli1}, a $k$-server PIR code is useful to reduce the storage overhead of a traditional $k$-server PIR protocol. Finding $k$-server PIR codes with a small blocklength for a given dimension has recently become an important research challenge. In this work, we propose new constructions of PIR codes from combinatorial structures, introducing the notion of $k$-partial packing. Several bounds over the existing literature are improved.
\end{abstract}

{\bf Keywords: } Privacy Information Retrieval; PIR codes; Configurations; Packings.

\section{Introduction}

A Distributed Storage System (DSSs) consists of a set of hard drives (disks), or nodes, and it is used to store data in a distributed manner. DSSs are an integral part of modern data centers which support large scale computing applications. Reasons why one may want to store data in a distributed manner (rather than on a single disk) include ease of scale and reliability. To achieve reliability, redundancy is needed.  Instead of using replication of the nodes, more advanced coding techniques are implemented because of storage efficiency. 

Fazeli, Vardy and Yaacobi \cite{Fazeli1} proposed the definition of a $k$-server PIR code as an important ingredient in the construction of coded PIR protocols. PIR codes are one of the classes of linear codes that received more attention for their applications to DSSs.
A $k$-server PIR code is a binary linear $[m,s]$-code admitting a generator matrix such that for every integer $i$ with $1\le i\le s$ there exist $k$ disjoint subsets of columns (called recovery sets) that add up to the vector of weight one, with the single $1$ in position $i$. Here $m$ is the total number of bits stored on all the servers and $s$ is the number of bits in the database.
Clearly, for given $k$ and $s$ the optimal $m$ is the minimal one. 
Given $k$ and $s$, let $P(s,k)$ denote the least integer $m$ for which a $k$-server PIR $[m,s]$-code exists. 
The storage overhead of a $k$-server PIR $[m,s]$-code is the ratio $m/s$.

Already in \cite{Fazeli1} is was noted that notions and tools from incidence geometry and design theory could be useful to construct good PIR codes. In particular, Lemma 7 in \cite{Fazeli1} states that a collection $S_1,\ldots,S_r$ of subsets of a finite set $X$ such that every element of $X$ belongs to at least $k-1$ subsets and two distinct subsets meet in at most one element give rise to a  $k$-server $[r+s,s]$-code. This result motivates the following definition.

\begin{definition}\label{Def:k-partial} Let $X$ be a finite set of size $s$. A $k$-partial packing of $X$ is a set of $k-1$ partitions of $X$ such that 
\begin{itemize}
\item[(i)] each subset in any partition has size at least two;
\item[(ii)] two subsets from two distinct partitions meet in at most one point. 
\end{itemize}
The {\em order} $r$ of a $k$-partial packing is the total number of subsets of $X$ belonging to its partitions. A $k$-partial packing is {\em homogeneous} if all the subsets from any partition have the same size.
\end{definition}

It is clear that any $k$-partial packing of order $r$ of a set of size $s$ gives rise to a $k$-server PIR $[r+s,s]$-code, with storage overhead $1+\frac{r}{s}$.

A $k$-partial packing $\mathfrak P$ of a set $X$ clearly defines  $h$-partial packings of $X$ for every $h<k$. We will call them {\em partial subpackings of }$\mathfrak P$. It is known that from a $k_1$-server PIR $[m_1,s]$-code and a $k_2$-server PIR $[m_2,s]$-code
one can construct a $(k_1+k_2)$-server PIR $[m_1+m_2,s]$-code; see e.g. \cite[Theorem 2]{Kurz}. Here, it is interesting to note that if $h_1+h_2\le k+1$, then we can construct partial subpackings of $\mathfrak P$ giving rise to an $h_1$-server PIR $[m_1,s]$-code, an
$h_2$-server PIR $[m_2,s]$-code, and an $(h_1+h_2-1)$-server PIR $[m_3,s]$-code with
$$
m_3=m_1+m_2-s<m_1+m_2.
$$
This provides a strong motivation for searching $k$-partial packings with large $k$ with respect to $s$.

Other combinatorial objects which provide $k$-server PIR codes are the so-called configurations; see \cite[Chapter VI, Section 7]{Handbook}.
\begin{definition}\label{Def1-Configur} 
\begin{description}
  \item[i)] A $(v_{t},b_{z})$-configuration is an incidence
      structure of $v$ points and $b$ lines, such that each
      line contains $z$ points, each point lies on $t$
      lines, and  two distinct points are connected by
      {at most} one line.
  \item[ii)] If $v=b$, and hence $t=z$, the configuration
      is \emph{symmetric}, and it is denoted by $v_{z}$.
\end{description}
\end{definition}

It is straightforward to check that a $(v_{t},b_{z})$-configuration produces a $(t+1)$-server PIR code with $s=v$ and storage overhead $1+\frac{b}{v}$. In particular, any symmetric configuration defines a PIR code with storage overhead equal to $2$.
The dual incidence structure of a configuration is still a configuration, which defines a $(z+1)$-server PIR code with $s=b$ and storage overhead $1+\frac{v}{b}$.

We remark that a homogeneous $k$-partial packing of a set $X$, together with its partial subpackings, naturally define configurations.

The aim of this paper is to obtain new upper bounds on $P(s,k)$ through the notions of $k$-partial packings and configurations. Our constructions provide both families of PIR codes whose storage overhead is asymptotically optimal (see Table \ref{table1}), and PIR codes that provide improvements over the existing literature for small values of $s$ and $k$ (see Table \ref{table2}).

We also recall that the PIR codes obtained in this paper are systematic.
Then, by \cite[Corollary 1]{Skachek}, they also produce locally recoverable codes with locality equal to the maximum size of a recovery set and availability $k-1$.

\section{Families of $k$-partial packings}

\subsection{Direct product construction}
Assume that $s$ can be written as the product of $k-1$ integers greater than $2$, that is
$$
s=a_1\cdot a_2\cdots a_{k-1}, \qquad \text{ with } a_i\ge 2.
$$

For an integer $a\ge 2$, let $C_a$ denote the cyclic group of order $a$.
Let 
$$
G=C_{a_1}\times C_{a_2}\times \cdots \times C_{a_{k-1}}
$$
be the direct product of the groups $C_{a_i}$ for $i=1,\ldots,k-1$.

Finally, let $\mathcal P_i$ be the partition induced by the cosets of the subgroup $C_{a_i}$, naturally embedded in $G$.

\begin{proposition} For each $w\le k-1$,
$$
\mathfrak P=\{\mathcal P_1,\ldots, \mathcal P_{w}\}
$$
is a $(w+1)$-partial packing of $G$ of order $\frac{s}{a_1}+\ldots+\frac{s}{a_w}$.
\end{proposition}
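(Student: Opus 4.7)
The plan is to verify the three requirements of Definition \ref{Def:k-partial} (after confirming the cardinality count), treating each coset family $\mathcal{P}_i$ as a partition of $G$ into $|G|/|C_{a_i}| = s/a_i$ cosets.

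First, I would observe that since $G$ is abelian and $C_{a_i}$ is a subgroup, $\mathcal{P}_i$ is indeed a partition of $G$ into cosets, each of size $a_i \geq 2$; this immediately gives property (i) of Definition \ref{Def:k-partial}, and the order count follows by summing $|\mathcal{P}_i| = s/a_i$ for $i = 1, \ldots, w$.

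The main (and essentially only non-trivial) step is property (ii), namely that two cosets belonging to distinct partitions $\mathcal{P}_i$ and $\mathcal{P}_j$ meet in at most one element. For this, take cosets $g C_{a_i}$ and $h C_{a_j}$ with $i \neq j$, and suppose $x, y$ lie in their intersection. Writing $x = g c = h c'$ and $y = g d = h d'$ for suitable $c, d \in C_{a_i}$ and $c', d' \in C_{a_j}$, the element $x^{-1} y$ belongs both to $C_{a_i}$ and to $C_{a_j}$. Now I would invoke the key structural fact about the direct product: since $C_{a_i}$ and $C_{a_j}$ are embedded in $G$ in disjoint coordinate positions, their intersection inside $G$ is the trivial subgroup. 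Hence $x^{-1} y$ is the identity, so $x = y$, as required.

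I do not anticipate a real obstacle here; the proof reduces to the standard fact that subgroups embedded on disjoint coordinates of a direct product intersect trivially. The only minor care needed is in matching the combinatorial language of Definition \ref{Def:k-partial} (``two subsets from distinct partitions meet in at most one point'') to the algebraic statement about coset intersections.
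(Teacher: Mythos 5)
Your proof is correct and follows essentially the same route as the paper; the only difference is that you spell out the algebraic argument behind property (ii) — that $x^{-1}y$ lands in $C_{a_i}\cap C_{a_j}=\{e\}$ — where the paper simply asserts the coset intersection is ``clearly'' at most a singleton.
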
 
\begin{proof}
As $a_i\geq 2$ for $i=1,\ldots,w$, property (i) of Definition \ref{Def:k-partial} holds. Also, for any two distinct indices $i,j$, the intersection of a coset in $\mathcal{P}_i$ and a coset in $\mathcal{P}_j$ clearly contains at most one element, and hence (ii) holds. Finally, observe that $|\mathcal{P}_i|=\frac{s}{a_i}$ for any $i=1,\ldots,w$.
\end{proof}

The following result is a straightforward corollary. 
\begin{theorem}\label{groupspackings} Let $$
s=a_1\cdot a_2\cdots a_{k-1}, \qquad \text{ with } a_i\ge 2.
$$
Then for each $w\le k-1$ there exists a $(w+1)$-server PIR $[m,s]$-code with
$$
m=s+\frac{s}{a_1}+\ldots+\frac{s}{a_{w}}
$$
and storage overhead $1+\sum_{i=1}^{w}\frac{1}{a_i}$. In particular,  if $s=h^{k-1}$, for each $w\le k-1$ there exists a $(w+1)$-server PIR $[s+w\frac{s}{h},s]$-code with storage overhead $1+\frac{w}{h}$.
\end{theorem}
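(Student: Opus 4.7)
My plan is to derive Theorem \ref{groupspackings} as a direct consequence of the preceding Proposition combined with the basic translation from partial packings to PIR codes recorded earlier in the paper.

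First, I would invoke the Proposition to obtain, for each $w\le k-1$, a $(w+1)$-partial packing $\mathfrak P=\{\mathcal P_1,\ldots,\mathcal P_w\}$ of the group $G=C_{a_1}\times\cdots\times C_{a_{k-1}}$ whose order is $r=\frac{s}{a_1}+\cdots+\frac{s}{a_w}$. Since $|G|=s$, the set $X=G$ has size $s$, so this is a $(w+1)$-partial packing of a set of size $s$ and order $r$.

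Next, I would apply the observation made just after Definition \ref{Def:k-partial}: every $k$-partial packing of order $r$ of a set of size $s$ yields a $k$-server PIR $[r+s,s]$-code of storage overhead $1+\frac{r}{s}$. Specializing to our $\mathfrak P$ with $k$ replaced by $w+1$, this immediately produces a $(w+1)$-server PIR $[m,s]$-code with
\[
m=s+r=s+\frac{s}{a_1}+\cdots+\frac{s}{a_w},
\]
and storage overhead $1+\sum_{i=1}^{w}\frac{1}{a_i}$, which is exactly the first assertion.

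For the special case $s=h^{k-1}$, I would simply take $a_1=a_2=\cdots=a_{k-1}=h$ (these satisfy $a_i\ge 2$ whenever $h\ge 2$, which is forced since we must have $s\ge 2^{k-1}$ for the factorization to exist). Each term $\frac{s}{a_i}$ then equals $\frac{s}{h}$, so the total length becomes $m=s+w\,\frac{s}{h}$ and the storage overhead reduces to $1+\frac{w}{h}$. There is no real obstacle in this argument: the only substantive content is already contained in the Proposition and in the earlier remark on the packing-to-code correspondence, and the theorem is genuinely a bookkeeping corollary.
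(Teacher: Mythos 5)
Your argument is correct and is precisely how the paper obtains this result: the paper labels the theorem ``a straightforward corollary'' of the preceding Proposition, relying on the same packing-to-PIR-code correspondence stated after Definition~\ref{Def:k-partial}, and the specialization to $s=h^{k-1}$ with all $a_i=h$ is the same bookkeeping step you carry out.
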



\subsection{Homogeneous partial packings from  Projective Geometry}

For $q$ a prime power, let ${\rm{PG}}(N,q)$ be the projective space of dimension $N$ over the finite field with $q$ elements $\mathbb F_q$. We recall that the size of ${\rm{PG}}(N,q)$ is
$$
s(N,q)=\frac{q^{N+1}-1}{q-1}=q^N+q^{N-1}+\ldots+q+1,
$$
and the total number of lines is
$$
L(N,q)=\frac{(q^{N+1}-1)(q^N-1)}{(q^2-1)(q-1)}.
$$
Also, a line in ${\rm{PG}}(N,q)$ consists of $q+1$ points, and two distinct lines meet in at most one point.

A {\em resolution class} of ${\rm{PG}}(N,q)$ is a set of lines which partition the point set. A {\em packing} (or {\em resolution}) of the lines of ${\rm{PG}}(N,q)$ is a partition of the lines into resolution classes. Clearly, any $k-1$ resolution classes from a packing are a $k$-partial packing of ${\rm{PG}}(N,q)$.

Sufficient conditions on $N$ and $q$ for a packing to exist
are known since the seventies.
\begin{proposition}\cite{Baker,Beutelspacher}\label{resolution} A packing of the lines of ${\rm{PG}}(N,q)$ exists if
\begin{itemize}
\item[(a)] $N=2z+1$, $q=2$, $z\ge 1$;
\item[(b)] $N=2^{i+1}-1$, $i\ge 1$, $q$ a prime power. 
\end{itemize}
\end{proposition}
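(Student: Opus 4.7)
My plan is to treat both cases by induction on the dimension, exploiting the general principle that a line-spread (i.e., a single resolution class) of $\mathrm{PG}(N,q)$ exists whenever $N$ is odd: one identifies $\mathbb{F}_q^{N+1}$ with a two-dimensional space over $\mathbb{F}_{q^{(N+1)/2}}$ and takes the one-dimensional subspaces. A packing is then a partition of \emph{all} lines into such spreads, and my strategy is to build this partition recursively by combining an inductive packing of a lower-dimensional subspace with a careful organization of the remaining "transversal" lines.

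For part (b), I would induct on $i$. The base case $i=1$ is a packing of $\mathrm{PG}(3,q)$, which I would obtain by a Denniston-style Singer-cycle construction: start with the Desarguesian (regular) line spread and let a suitable cyclic collineation group act on it, producing enough distinct parallel classes to exhaust all lines (the count matches $q^2+q+1$ classes of $q^2+1$ lines each against the total $(q^2+1)(q^2+q+1)$ of lines in $\mathrm{PG}(3,q)$). For the inductive step from $N=2^{i+1}-1$ to $N'=2N+1=2^{i+2}-1$, I would use that $(N+1)\mid (N'+1)$, so $\mathrm{PG}(N',q)$ admits a Desarguesian spread $\mathcal{T}$ by subspaces of projective dimension $N$, each of which carries a packing by induction. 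The lines internal to some member of $\mathcal{T}$ are then covered, and what remains is to partition the transversal lines — those meeting two distinct members of $\mathcal{T}$ in one point each — into additional resolution classes, which I would do by lifting a packing of the quotient projective line $\mathrm{PG}(1,q^{N+1})$ through the natural $\mathrm{PGL}$-action on $\mathcal{T}$.

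Part (a) admits a more concrete attack: exploiting $\mathbb{F}_2^{2z+2}\cong\mathbb{F}_4^{z+1}$, the one-dimensional $\mathbb{F}_4$-subspaces correspond bijectively to lines of $\mathrm{PG}(2z+1,2)$ and thus form a line spread, which a Singer cycle of $\mathbb{F}_{2^{2z+2}}^*$ rotates into a packing by orbit-stabilizer counting. The main obstacle I anticipate lies in the inductive step of (b): assembling the transversal lines into resolution classes requires a coherent combinatorial gadget (in effect, a packing of the quotient projective line together with consistent affine transitions between spread members), and verifying that every transversal line appears in exactly one resulting class — without collision and without omission — is where the technical core of the Baker--Beutelspacher argument resides.
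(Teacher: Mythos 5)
The paper does not actually prove this proposition; it is quoted as a known theorem with citations to Baker and Beutelspacher, both of which are substantial research papers in their own right. So there is no internal proof to compare against, and what you have offered is a sketch of those external results. Unfortunately the sketch has a concrete, fatal gap in part~(a) and acknowledged gaps in part~(b). For~(a), you propose to take the Desarguesian line spread of $\mathrm{PG}(2z+1,2)$ coming from the $\mathbb{F}_4$-structure on $\mathbb{F}_{2^{2z+2}}$ and rotate it by the Singer cycle $\mathbb{F}_{2^{2z+2}}^{*}$. But multiplication by any $\alpha\in\mathbb{F}_{2^{2z+2}}^{*}$ sends a $1$-dimensional $\mathbb{F}_4$-subspace to another $1$-dimensional $\mathbb{F}_4$-subspace, so the \emph{entire} Singer cycle stabilizes the Desarguesian spread setwise; its orbit consists of that single spread, not the $2^{2z+1}-1$ pairwise disjoint spreads a packing requires. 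Baker's actual construction (via $2$-designs partitioning the lines of $\mathrm{AG}(2m,2)$) is considerably more intricate and does not reduce to a Singer-orbit count.

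For~(b), the recursive shape you describe — spread $\mathrm{PG}(N',q)$ by $\mathrm{PG}(N,q)$'s, pack each spread member by induction, and then organize the transversal lines — is indeed the general silhouette of Beutelspacher's argument, but the part you flag as ``the technical core'' is exactly what is missing, and it is not a routine verification: one needs a coherent way of partitioning the transversal lines into resolution classes compatible with the already-chosen packings of the spread members, and no lifting of a ``packing of $\mathrm{PG}(1,q^{N+1})$'' exists in the naive sense (a projective line has no lines to pack). Your base case $\mathrm{PG}(3,q)$ is also underspecified: the Singer cycle there has order $(q+1)(q^2+1)$, which is coprime to the number $q^2+q+1$ of spreads in a packing, so orbit-stabilizer bookkeeping on the full Singer cycle cannot produce the packing; the Denniston--Beutelspacher construction for $\mathrm{PG}(3,q)$ uses a different cyclic collineation group together with a delicate choice of base spread. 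In short, the proposal correctly identifies the landmarks (spreads from odd-dimensionality, recursive spreading, group actions) but the two load-bearing steps — producing many disjoint spreads in~(a), and resolving the transversal lines in~(b) — are precisely where the cited theorems do their real work, and neither is supplied here.
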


Then the following holds.

\begin{theorem} Let $N$ and $q$ be as in (a) or (b) of Proposition \ref{resolution}. Then for $s=s(N,q)$ and any $k\le 1+( q^{N-1}+\ldots+q+1)$, there exists a $k$-server PIR $[m,s]$-code with
$$
m=s+\frac{(k-1)s}{q+1}
$$
and storage overhead $1+\frac{k-1}{q+1}$.
\end{theorem}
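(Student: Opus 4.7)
The plan is to combine Proposition~\ref{resolution} with the observation (made right after Definition~\ref{Def:k-partial}) that a $k$-partial packing of order $r$ of an $s$-element set yields a $k$-server PIR $[r+s,s]$-code. So the real task is to produce a $k$-partial packing of the point set of ${\rm{PG}}(N,q)$ of the appropriate order, and to check that the stated range of $k$ is exactly what is allowed.

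First I would invoke Proposition~\ref{resolution} under hypothesis (a) or (b) to obtain a packing of the lines of ${\rm{PG}}(N,q)$, that is, a partition of the lines into resolution classes, each of which partitions the point set into lines of size $q+1$. Consequently, each resolution class consists of exactly $s/(q+1)$ lines, and the number of resolution classes in any packing equals
\[
\frac{L(N,q)}{s/(q+1)}=\frac{L(N,q)(q+1)}{s}=\frac{q^N-1}{q-1}=q^{N-1}+\cdots+q+1.
\]
This already pins down the range $k-1\le q^{N-1}+\cdots+q+1$ promised in the statement.

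Next, I would pick any $k-1$ resolution classes from this packing and verify that they form a $k$-partial packing of the point set in the sense of Definition~\ref{Def:k-partial}. Property (i) is immediate since each line has $q+1\ge 2$ points; property (ii) follows from the classical fact (recalled in the excerpt) that two distinct lines of ${\rm{PG}}(N,q)$ meet in at most one point, and lines in the same resolution class are disjoint while lines from different resolution classes are certainly distinct. The order of this $k$-partial packing is $(k-1)\cdot s/(q+1)$, since each of the $k-1$ chosen classes contributes $s/(q+1)$ lines.

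Finally, applying the observation that a $k$-partial packing of order $r$ of a set of size $s$ yields a $k$-server PIR $[r+s,s]$-code, we obtain a $k$-server PIR $[m,s]$-code with
\[
m=s+\frac{(k-1)s}{q+1},
\]
and storage overhead $m/s=1+\frac{k-1}{q+1}$, as claimed. There is no real obstacle: everything reduces to the counting of resolution classes and the routine verification of Definition~\ref{Def:k-partial}; the only point to be careful about is the bookkeeping that links the number of available resolution classes to the admissible range of $k$.
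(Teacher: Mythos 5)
Your proof is correct and follows essentially the same route as the paper's: invoke Proposition~\ref{resolution} to get a line packing, count the $\frac{q^N-1}{q-1}$ resolution classes and the $\frac{s}{q+1}$ lines per class, take any $k-1$ classes as a $k$-partial packing, and apply the observation after Definition~\ref{Def:k-partial}. You merely spell out the verification of conditions (i) and (ii) more explicitly than the paper does.
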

\begin{proof}
Note that there are $\ell(N,q)=\frac{s}{q+1}$ lines in any resolution class of ${\rm{PG}}(N,q)$, and a packing of the lines of ${\rm{PG}}(N,q)$ comprises $\frac{L(N,q)}{\ell(N,q)}=\frac{q^N-1}{q-1}=q^{N-1}+\ldots+q+1$ resolution classes. Then the $k$-partial packing of ${\rm{PG}}(N,q)$ obtained taking any $k-1$ resolution classes gives rise to a $k$-server PIR as in the claim.
\end{proof}

\subsection{Homogeneous partial packings from Affine Geometry}\label{2.3}

In ${\rm{AG}}(N,q)$ a resolution is easily obtained for any $N$ and $q$. Here a resolution class is just a parallelism class. Taking into account that every line contains $q$ points, and that the number of parallelism classes is $s(N-1,q)$, the following result is easily obtained.

\begin{theorem}\label{th2.5} Let $q$ be a prime power and $N$ an integer with $N\ge 2$. Then for $s=q^N$ and any $k\le 1+s(N-1,q)$ there exists
a $k$-server PIR $[m,s]$-code with
$$
m=s+\frac{(k-1)s}{q}
$$
and storage overhead $1+\frac{k-1}{q}$.
\end{theorem}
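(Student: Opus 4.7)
The plan is to mimic the argument used for the projective case but working with parallelism classes of lines in ${\rm AG}(N,q)$, which are easier to produce than resolution classes in ${\rm PG}(N,q)$ because no special divisibility conditions on $N$ and $q$ are required.

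First, I would recall the relevant structure of affine space. In ${\rm AG}(N,q)$ every line consists of $q$ points, and the relation ``parallel'' (same direction) partitions the set of lines into parallelism classes. Each such class is itself a partition of the $q^N$ points into $q^{N-1}$ pairwise disjoint lines. The number of distinct directions in ${\rm AG}(N,q)$ equals the number of points of the hyperplane at infinity in ${\rm PG}(N,q)$, namely $s(N-1,q)=(q^N-1)/(q-1)$. I would just quote these three facts, as they are standard.

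Next, since $k-1\le s(N-1,q)$, I can pick any $k-1$ parallelism classes $\mathcal{P}_1,\dots,\mathcal{P}_{k-1}$ and set $\mathfrak{P}=\{\mathcal{P}_1,\dots,\mathcal{P}_{k-1}\}$. I check the two requirements of Definition \ref{Def:k-partial}: (i) every block in every $\mathcal{P}_i$ is a line, hence of size $q\ge 2$; (ii) two blocks from distinct $\mathcal{P}_i,\mathcal{P}_j$ are two non-parallel affine lines, so they meet in at most one point. Hence $\mathfrak{P}$ is a $k$-partial packing of the point set of ${\rm AG}(N,q)$.

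Finally, I count its order. Each $\mathcal{P}_i$ contains $q^N/q=q^{N-1}=s/q$ lines, so the total number of blocks is $(k-1)s/q$. Invoking the observation stated right after Definition \ref{Def:k-partial} (any $k$-partial packing of order $r$ of an $s$-set yields a $k$-server PIR $[s+r,s]$-code), I immediately obtain a $k$-server PIR $[m,s]$-code with $m=s+(k-1)s/q$ and storage overhead $1+(k-1)/q$, as claimed. There is no real obstacle here: the only non-trivial input is the known fact that the number of parallelism classes of ${\rm AG}(N,q)$ equals $s(N-1,q)$, which fixes the admissible range of $k$; the remainder is bookkeeping.
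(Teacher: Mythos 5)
Your proposal is correct and follows the same route as the paper: the paper's Section 2.3 observes that in ${\rm AG}(N,q)$ each parallelism class is a resolution class, that each line has $q$ points, and that there are $s(N-1,q)$ parallelism classes, and then states Theorem \ref{th2.5} as an immediate consequence. You simply spell out the verification of conditions (i) and (ii) of Definition \ref{Def:k-partial} and the order count, which the paper leaves implicit.
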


Now we consider subsets $E$ of ${\rm{AG}}(N,q)$ of size $hq^{N-1}$ consisting of $h\le q$ parallel hyperplanes. There are $q^{N-1}$ directions not determined by these hyperplanes and each line with such directions  meets $E$ in precisely $h$ points. Then the following holds.  

\begin{theorem}\label{th2.6} Let $q$ be a prime power and $N$ an integer with $N\ge 2$. Then for $s=hq^{N-1}$, $h\le q$, and any $k\le 1+ q^{N-1}$ there exists
a $k$-server PIR $[m,s]$-code with
$$
m=(h+k-1)q^{N-1}=s+ (k-1)q^{N-1}
$$
and storage overhead $1+\frac{k-1}{h}$.
\end{theorem}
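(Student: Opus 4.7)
The plan is to construct a $k$-partial packing of $E$ of order $(k-1)q^{N-1}$ and then apply the observation following Definition~\ref{Def:k-partial}. The $k-1$ partitions will come from $k-1$ distinct parallelism classes of lines of $\mathrm{AG}(N,q)$, chosen among the $q^{N-1}$ directions that are not determined by the $h$ parallel hyperplanes forming $E$, as flagged in the statement.

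First, I would fix the common direction of the $h$ parallel hyperplanes forming $E$; this corresponds to a hyperplane of directions in the projective hyperplane at infinity, that is, a copy of $\mathrm{PG}(N-2,q)$ of size $s(N-2,q)$. The remaining $s(N-1,q)-s(N-2,q)=q^{N-1}$ directions of $\mathrm{AG}(N,q)$ are transversal to $E$. For each transversal direction $d$, every line in direction $d$ meets each hyperplane of $E$ in exactly one point, hence meets $E$ in exactly $h$ points; the $q^{N-1}$ parallel lines in direction $d$ therefore partition $E$ into $q^{N-1}$ blocks of size $h$.

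Next, I would pick any $k-1$ of the $q^{N-1}$ transversal directions (possible since $k-1\le q^{N-1}$) and let $\mathfrak{P}$ be the collection of the $k-1$ induced partitions of $E$. Provided $h\ge 2$, property~(i) of Definition~\ref{Def:k-partial} holds. Property~(ii) follows because two distinct lines of $\mathrm{AG}(N,q)$ with different directions meet in at most one point, and the same then holds for their traces on $E$.

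Hence $\mathfrak{P}$ is a $k$-partial packing of $E$ of order $r=(k-1)q^{N-1}$, which yields a $k$-server PIR $[s+r,s]$-code with $s=hq^{N-1}$, giving $m=(h+k-1)q^{N-1}$ and storage overhead $1+(k-1)/h$, as claimed. I do not foresee any real obstacle: the only mildly subtle point is the count of transversal directions via the inclusion $\mathrm{PG}(N-2,q)\subset\mathrm{PG}(N-1,q)$ at infinity, and all remaining verifications are routine.
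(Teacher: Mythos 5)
Your argument is correct and follows the same route the paper sketches in the two sentences preceding Theorem~\ref{th2.6}: take $E$ to be $h$ parallel hyperplanes, observe that the $q^{N-1}$ directions not lying in the common hyperplane at infinity of those hyperplanes give parallel classes whose lines meet $E$ in exactly $h$ points, and select $k-1$ of them to form the $k$-partial packing. You have merely filled in the routine verifications (the count $s(N-1,q)-s(N-2,q)=q^{N-1}$ of transversal directions and the check of properties~(i) and~(ii) of Definition~\ref{Def:k-partial}), and your parenthetical remark that $h\ge 2$ is needed for property~(i) is a fair observation that the theorem statement leaves implicit.
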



\subsection{Partial packings from other geometrical objects}

\subsubsection{Maximal arcs}
In a projective plane ${\rm{PG}}(2,q)$, a maximal arc is a set  of $v$ points $\mathcal K$ such that every line of ${\rm{PG}}(2,q)$ is either disjoint from $\mathcal K$ or meets $\mathcal K$ in the same number $z$ of points. If this happens $\mathcal K$ is said to be a $\{v; z\}$-maximal arc.

The existence problem for maximal arcs of given size is completely solved; see \cite{SurveyGMT}.
\begin{theorem} A $\{v;z\}$-maximal arc of ${\rm{PG}}(2,q)$ exists if and only if there exist $0\le n'\le n$ such that
$$
q=2^{n}, \qquad z=2^{n'}, \qquad v=zq-q+z.
$$
\end{theorem}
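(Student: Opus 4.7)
The plan is to separate necessity from sufficiency, and to identify which pieces are elementary counting and which invoke the deep non-existence theorem that is the hard content of the statement.

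For necessity, I would first derive $v=zq-q+z$ by a double count: fix any $P\in\mathcal K$ and observe that each of the $q+1$ lines through $P$ meets $\mathcal K$ in exactly $z$ points and together they partition $\mathcal K\setminus\{P\}$, so $v-1=(q+1)(z-1)$. Next, to force $z\mid q$ in the non-trivial range $1<z<q$, I would pick any external line $\ell$ (which exists in this range) and any $Q\in\ell$: the $q$ lines through $Q$ other than $\ell$ are either external or secants to $\mathcal K$ meeting it in exactly $z$ points, and the secants through $Q$ partition $\mathcal K$. Hence $z\mid v$, and substituting $v=z(q+1)-q$ yields $z\mid q$. Writing $q=p^n$ for $p$ prime then gives $z=p^{n'}$ with $0\le n'\le n$.

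The main obstacle is ruling out odd $p$ in the non-trivial range. The trivial cases $z=1$ (a single point) and $z=q$ (the complement of a line) always exist, but for $1<z<q$ with $q$ odd the non-existence is the celebrated theorem of Ball, Blokhuis and Mazzocca; its proof uses substantial polynomial methods and would have to be invoked as a black box, citing the survey \cite{SurveyGMT}.

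Sufficiency when $q=2^n$ would be supplied by Denniston's classical construction: given an additive subgroup of $(\mathbb F_q,+)$ of order $z=2^{n'}$, a suitable pencil of conics sharing a common nucleus produces a $\{zq-q+z;z\}$-maximal arc. Verifying the cardinality and the two-intersection property with lines is then routine once the construction is set up, and completes the proof.
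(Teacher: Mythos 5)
The paper states this theorem with a bare citation to \cite{SurveyGMT} and gives no proof of its own, so there is no internal argument to compare against; what you have written is a correct sketch of the standard route through the literature. The elementary pieces are right: counting $\mathcal{K}\setminus\{P\}$ via the $q+1$ secants through a point $P\in\mathcal{K}$ gives $v=(q+1)(z-1)+1=zq-q+z$; an external line exists once $1<z<q$ because $v=z(q+1)-q<z(q+1)$, so not every line through a point off $\mathcal{K}$ can be a secant; and the $q$ lines through a point $Q$ on an external line $\ell$ other than $\ell$ itself partition $\mathcal{K}$ into secant sections of size $z$, forcing $z\mid v$ and hence $z\mid q$. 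With $q=p^{n}$ this pins down $z=p^{n'}$, and the only genuinely deep step, excluding odd $p$ for $1<z<q$, is indeed the Ball--Blokhuis--Mazzocca theorem, correctly treated as a black box. Sufficiency via Denniston's pencil of conics over an additive subgroup of order $2^{n'}$ (with the common nucleus adjoined) is also the standard argument.

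You are also right to single out the trivial cases $z=1$ and $z=q$: a single point and the complement of a line are $\{1;1\}$- and $\{q^{2};q\}$-maximal arcs in every $\mathrm{PG}(2,q)$, so the ``only if'' of the statement fails as literally written when $q$ is odd. The theorem must be read with the trivial arcs excluded, i.e.\ in the range $1<z<q$, which is precisely the hypothesis of Ball--Blokhuis--Mazzocca and the convention adopted in \cite{SurveyGMT}; the paper does not make this restriction explicit, but your proof handles it correctly.
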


For a point $P$ not in $\mathcal K$, the lines through $P$ that are not disjoint from $\mathcal K$ give rise to a partition of $\mathcal K$ in subsets of size $z$. Also, joining $k-1$ partitions corresponding to $q+1$ collinear points gives rise to a $k$-partial packing of $\mathcal K$. Then the following holds.

\begin{corollary} Let $s$ be an integer of the form
$s=2^{n+n'}-2^{n}+2^{n'}$, for some $1\le n'\le n$. Then for each $k\le 2^n+2$ 
there exists a $k$-server PIR $[m,s]$-code with
$$
m=s+\frac{(k-1)s}{2^{n'}}
$$
and storage overhead $1+\frac{(k-1)}{2^{n'}}$. 
\end{corollary}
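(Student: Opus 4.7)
The plan is to flesh out the sketch immediately preceding the corollary by constructing an explicit $k$-partial packing of a $\{v;z\}$-maximal arc and then invoking the correspondence between partial packings and PIR codes. Fix, via the preceding theorem, a $\{v;z\}$-maximal arc $\mathcal K$ in $\mathrm{PG}(2,q)$ with $q=2^n$, $z=2^{n'}$, and $v=zq-q+z=2^{n+n'}-2^n+2^{n'}$; set $s=v$. By the defining property of $\mathcal K$, every line of the plane is either external to $\mathcal K$ or a $z$-secant of it.

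Next I would describe the partitions used. For each $P\notin\mathcal K$, the $z$-secants through $P$ partition $\mathcal K$ into $v/z=s/2^{n'}$ parts of size $z$; call this partition $\mathcal P_P$. To obtain a $k$-partial packing I must select $k-1$ such partitions so that any two parts coming from distinct partitions intersect in at most one point. The trick, following the hint in the text, is to take the base points on a common external line $\ell$: any two $z$-secants $m_i,m_j$ through distinct points $P_i,P_j\in\ell$ must be distinct, because the unique line through $P_i$ and $P_j$ is $\ell$ itself, which is external and therefore cannot equal the secant $m_i$. Two distinct lines of a projective plane meet in at most one point, so property (ii) of Definition~\ref{Def:k-partial} holds; property (i) is immediate since $z=2^{n'}\ge 2$.

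The main technical point, and essentially the only one, is showing that an external line exists so that $k-1\le q+1$ collinear external points can be chosen. A standard incidence count gives $v(q+1)/z$ secants and hence $q(q+1-z)/z\ge 1$ external lines as long as $z\le q$, which holds under the hypothesis $n'\le n$. Finally, the packing has order $(k-1)(v/z)=(k-1)s/2^{n'}$, and the observation recorded immediately after Definition~\ref{Def:k-partial} promotes it to a $k$-server PIR $[s+(k-1)s/2^{n'},s]$-code with storage overhead $1+(k-1)/2^{n'}$, as claimed.
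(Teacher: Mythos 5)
Your proof is correct and follows exactly the argument sketched in the paper: choose $k-1$ base points on an external line of the maximal arc, observe that secants through distinct collinear base points are distinct lines (since the only common line is external), and invoke the partial-packing-to-PIR-code correspondence. Your added incidence count verifying that an external line exists (there are $q(q+1-z)/z\ge 1$ of them when $z\le q$) is a worthwhile detail the paper leaves implicit.
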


\subsubsection{Classical unitals}

A classical unital $U$ in ${\rm{PG}}(2,q^2)$  is the set of points whose homogeneous coordinates $(x_0,x_1,x_2)$ satisfy the equation 
$x_0^{q+1}+x_1^{q+1}+x_2^{q+1}=0$, up to projectivities.

\begin{theorem}\cite[Section 7.3]{Hirschfeld}
The set $U$ consists of $q^3+1$ points, and each point in ${\rm{PG}}(2,q^2)\setminus U$ defines a partition of $U$ in $q^2-q+1$ subsets of $q+1$ collinear points.
\end{theorem}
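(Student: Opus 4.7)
The plan is to establish the two assertions separately, using the norm map $N\colon\mathbb{F}_{q^2}^{*}\to\mathbb{F}_{q}^{*}$, $x\mapsto x^{q+1}$, which is surjective with kernel of order $q+1$, so that each element of $\mathbb{F}_q^{*}$ has exactly $q+1$ preimages in $\mathbb{F}_{q^2}$.

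For the cardinality $|U|=q^{3}+1$, I would dehomogenize. On the line at infinity $x_2=0$, the equation $x_0^{q+1}+x_1^{q+1}=0$ has $q+1$ projective solutions (fix $x_1=1$ and solve $x_0^{q+1}=-1$ via the norm map). In the affine chart $x_2=1$, the equation reads $N(x_0)+N(x_1)=-1$; a short case analysis on the value of $N(x_1)$, separating $x_1=0$, $N(x_1)=-1$, and the remaining values, uses only surjectivity of $N$ and the size of its kernel to yield $q^{3}-q$ affine solutions. Adding gives $q^{3}+1$.

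For the partition structure through an external point $P$, the crux is to prove that every line of $\mathrm{PG}(2,q^2)$ meets $U$ in exactly $1$ or $q+1$ points. I would parametrize an arbitrary line by two representatives and restrict $F(x_0,x_1,x_2)=x_0^{q+1}+x_1^{q+1}+x_2^{q+1}$ to it, obtaining a binary Hermitian form on $\mathrm{PG}(1,q^2)$; by the standard classification, such a form is either degenerate (yielding a unique isotropic point, i.e.\ a tangent) or non-degenerate (yielding exactly $q+1$ isotropic points, i.e.\ a secant). Granted this dichotomy, for $P\notin U$ let $t$ and $s$ denote the numbers of tangents and secants through $P$. Since the $q^2+1$ lines through $P$ partition $U$, one has $t+s=q^2+1$ and $t+(q+1)s=q^{3}+1$; solving gives $t=q+1$ and $s=q^{2}-q$, so the $s$ secants through $P$ carve out the required subsets of $q+1$ collinear points on $U$.

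The main obstacle is the line-intersection classification; everything else reduces to algebraic bookkeeping with the norm map. A cleaner alternative invokes the Hermitian polarity associated with $F$: for $P\notin U$ its polar line $P^{\perp}$ is a secant, and the $q+1$ points $P^{\perp}\cap U$ are precisely the contact points of the tangents from $P$, which pins down $t=q+1$ directly and bypasses the form-restriction step.
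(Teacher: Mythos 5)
Your count $|U|=q^3+1$ is fine (the norm-map bookkeeping checks out: $q+1$ points at infinity and $q^3-q$ affine solutions), and the tangent/secant dichotomy for lines, via restriction of the Hermitian form to a line, is the right tool. The arithmetic $t=q+1$, $s=q^2-q$ is also correct.

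The gap is in the final step. You conclude that ``the $s$ secants through $P$ carve out the required subsets of $q+1$ collinear points on $U$.'' But $s=q^2-q$, while the theorem asserts a partition of $U$ into $q^2-q+1$ such subsets, and indeed $(q^2-q)(q+1)=q^3-q<q^3+1$: the secants through $P$ only cover $U$ minus the $q+1$ tangency points, so they do not partition $U$ at all. The missing block is exactly the set of $q+1$ tangency points. You have the key fact for this already in your closing remark: those tangency points are precisely $P^{\perp}\cap U$, where $P^{\perp}$ is the polar of $P$, so they are collinear. But you invoke the polarity only as a shortcut to recompute $t$, not to supply the $(q^2-q+1)$-th block. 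To fix the argument, observe that $P\notin P^{\perp}$ (since $P\notin U$), that the $q^2-q$ secants through $P$ together with the polar line $P^{\perp}$ cover $U$, and that these $q^2-q+1$ lines meet $U$ in pairwise disjoint sets of $q+1$ collinear points, yielding the claimed partition.

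(The paper itself gives no proof; it cites Hirschfeld's \emph{Projective geometries over finite fields}, Section~7.3, so there is no internal argument to compare against beyond the standard facts about Hermitian curves you are using.)
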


If we consider a line $l$ meeting $U$ in precisely one point $P$, then the $q^2$ points on $l$ distinct from $P$ define disjoint partitions. Then the following holds.

\begin{corollary} Let $s$ be an integer of the form
$s=q^3+1$, for some prime power $q$. Then for each \textcolor{black}{$k\le q^2+1$} there exists a $k$-server PIR $[m,s]$-code with
$$
m=s+\frac{(k-1)s}{q+1}
$$
and storage overhead $1+\frac{(k-1)}{q+1}$. 
\end{corollary}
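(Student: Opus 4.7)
The plan is to apply the general construction ``$k$-partial packing of order $r$ yields a $k$-server PIR $[r+s,s]$-code'' to a carefully chosen $k$-partial packing of the classical unital $U\subset {\rm PG}(2,q^2)$. Concretely, I would pick a tangent line $\ell$ to $U$, meeting $U$ in the unique point $P\in U$; such a line exists because every point of $U$ carries a (unique) tangent line. Since $|\ell|=q^2+1$ and only $P$ lies in $U$, the line $\ell$ contains exactly $q^2$ points external to $U$. Because the corollary assumes $k\le q^2+1$, we may select any $k-1$ of these external points, say $Q_1,\ldots,Q_{k-1}\in\ell\setminus\{P\}$. By the preceding theorem, each $Q_i$ determines a partition $\mathcal{P}_i$ of $U$ into $q^2-q+1$ subsets of $q+1$ collinear points.

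The main step is to verify that $\mathfrak P=\{\mathcal{P}_1,\ldots,\mathcal{P}_{k-1}\}$ is a $(k)$-partial packing in the sense of Definition \ref{Def:k-partial}. Property (i) is immediate since every block has size $q+1\ge 2$. Property (ii) is the heart of the argument: take blocks $A\in\mathcal{P}_i$ and $B\in\mathcal{P}_j$ with $i\ne j$; by construction, $A$ consists of the $q+1$ points of some secant line $m_A$ through $Q_i$ and similarly $B$ lies on some secant line $m_B$ through $Q_j$. If $m_A=m_B$, then this common line would contain both $Q_i$ and $Q_j$, and hence coincide with $\ell$; but $\ell$ is tangent to $U$, not secant, a contradiction. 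Thus $m_A\ne m_B$, and two distinct lines of ${\rm PG}(2,q^2)$ meet in at most one point, so $|A\cap B|\le 1$. This is the place where the choice of $\ell$ as a tangent (and not just any line) is essential, and is the one subtle verification of the proof.

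Once $\mathfrak P$ is confirmed to be a $k$-partial packing, its order is
$$
r \;=\; (k-1)\cdot (q^2-q+1) \;=\; (k-1)\cdot\frac{q^3+1}{q+1} \;=\; \frac{(k-1)s}{q+1},
$$
using $s=q^3+1=(q+1)(q^2-q+1)$. Applying the general construction from the remark following Definition \ref{Def:k-partial}, we obtain a $k$-server PIR $[s+r,s]$-code, i.e.\ a $k$-server PIR $[m,s]$-code with $m=s+\tfrac{(k-1)s}{q+1}$ and storage overhead $1+\tfrac{k-1}{q+1}$, as claimed.
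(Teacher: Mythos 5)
Your overall strategy matches the paper's (take the $q^2$ points of a tangent line $\ell$ distinct from its contact point $P$, and use the partitions of $U$ they determine), but your verification of property (ii) contains a genuine gap, coming from a misdescription of the blocks in each partition $\mathcal P_i$.

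You assert that every block $A\in\mathcal P_i$ is ``the set of $q+1$ points of some secant line $m_A$ through $Q_i$.'' That cannot be the whole story: an external point $Q_i$ lies on exactly $q^2-q$ secant lines to $U$ (and $q+1$ tangent lines), yet the partition $\mathcal P_i$ has $q^2-q+1$ blocks of size $q+1$ --- a count you yourself use to compute the order $r=(k-1)(q^2-q+1)$. The missing block is the set of the $q+1$ tangent contact points (the ``feet'') seen from $Q_i$; this set is collinear, but it lies on the polar line of $Q_i$, which does \emph{not} pass through $Q_i$. Consequently your case analysis --- ``if $m_A=m_B$ then the common line contains $Q_i$ and $Q_j$, hence is $\ell$'' --- does not cover the case where $A$ or $B$ is the polar block, so as written the proof of (ii) is incomplete.

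The gap can be closed. Since $Q_i\in\ell$ and $\ell$ is the polar line of $P$, reciprocity of the Hermitian polarity gives $P\in{\rm polar}(Q_i)$. So every block of $\mathcal P_i$ lies on a line through $Q_i$ or through $P$, both of which lie on $\ell$; similarly for $\mathcal P_j$. If $|A\cap B|\ge 2$, the common carrier line $m$ would thus contain two distinct points of $\ell$ (from the set $\{P,Q_i,Q_j\}$), forcing $m=\ell$, impossible since $\ell$ is tangent; the remaining sub-case (both carriers equal to ${\rm polar}(Q_i)={\rm polar}(Q_j)$) contradicts injectivity of the polarity. With this extra case analysis your argument becomes correct and coincides with the (implicit) one in the paper.
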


\subsubsection{Internal points to a conic}

Let $\mathcal C$ be an irreducible conic in ${\rm{PG}}(2,q)$, with $q$ an odd prime power. A point $P\in {\rm{PG}}(2,q)\setminus \mathcal C$ is external if it lies on a tangent line to $C$, and internal otherwise.

There exist precisely $(q^2-q)/2$ internal points. Also, a secant line of ${\rm{PG}}(2,q)$ contains $(q-1)/2$ points of $\mathcal C$, while an external line contains $(q+1)/2$ internal points of $\mathcal C$. Then clearly the lines through an external point $P$, distinct from the tangent lines at $P$, define a partition of the set of internal points of $\mathcal{C}$ in $q-1$ subsets of collinear points of cardinalities $(q-1)/2$ and $(q+1)/2$; see \cite{GiuliettiLinePartitions}.

If $q>3$, taking $k-1$ distinct external points lying on a same tangent line to $\mathcal{C}$, we obtain a $k$-partial packing of the set of internal points of $\mathcal{C}$. Note that, unlike the other partial packings from geometrical objects, this construction provides a non-homogeneous $k$-partial packings.
The following result then holds.

\begin{corollary} Let $s$ be an integer of the form
$s=(q^2-q)/2$, for some odd prime power $q>3$. Then for each $k\le q+1$ there exists a $k$-server PIR $[m,s]$-code with
$$
m=s+{(k-1)(q-1)},
$$
and storage overhead $1+\frac{2(k-1)}{q}$. 
\end{corollary}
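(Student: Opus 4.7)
The plan is to produce a $k$-partial packing of the set $X$ of internal points of the conic $\mathcal C$, and then invoke the general construction recalled after Definition~\ref{Def:k-partial}: any $k$-partial packing of order $r$ on a set of size $s$ yields a $k$-server PIR $[s+r,s]$-code. Here $|X|=s=(q^2-q)/2$ and the target is $r=(k-1)(q-1)$, so the whole task reduces to exhibiting $k-1$ partitions of $X$, each of size $q-1$, that satisfy conditions (i) and (ii) of Definition~\ref{Def:k-partial}.

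First, fix a tangent line $t$ to $\mathcal C$, and let $T$ be its point of contact with $\mathcal C$. The $q$ remaining points of $t$ all lie on the tangent $t$ and are therefore external to $\mathcal C$. Choose any $k-1$ of them, say $P_1,\ldots,P_{k-1}$; this is possible because $k-1\le q$ by assumption. Thanks to the discussion preceding the corollary, each external point $P_i$ determines a partition $\mathcal P_i$ of $X$ into $q-1$ subsets consisting of collinear internal points, with block sizes $(q-1)/2$ and $(q+1)/2$. The condition $q>3$ ensures $(q-1)/2\ge 2$, so property (i) of Definition~\ref{Def:k-partial} is satisfied.

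To verify property (ii), consider two blocks coming from distinct partitions $\mathcal P_i$ and $\mathcal P_j$ with $i\ne j$. Each such block is the intersection of $X$ with a line through $P_i$ (resp.~$P_j$) different from the two tangents through that point; in particular, the line is different from $t$. Any two distinct lines of ${\rm{PG}}(2,q)$ meet in at most one point, and two such lines through $P_i$ and $P_j$ respectively can coincide only if they equal the unique line $P_iP_j=t$, which has been excluded. Hence distinct blocks from $\mathcal P_i$ and $\mathcal P_j$ share at most one internal point, establishing (ii).

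Thus $\mathfrak P=\{\mathcal P_1,\ldots,\mathcal P_{k-1}\}$ is a $k$-partial packing of order $r=(k-1)(q-1)$, producing a $k$-server PIR $[m,s]$-code with $m=s+(k-1)(q-1)$. The storage overhead computation is then routine:
\[
1+\frac{r}{s}=1+\frac{(k-1)(q-1)}{(q^2-q)/2}=1+\frac{2(k-1)}{q}.
\]
There is no serious obstacle: the only delicate point is ensuring that the chosen external points are collinear (so that the tangent through them is shared and cannot reappear as a non-tangent line in any of the $\mathcal P_i$), which is exactly what forces the constraint $k-1\le q$; the hypothesis $q>3$ enters only to guarantee that no block has size one.
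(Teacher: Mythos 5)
Your proof is correct and follows essentially the same route as the paper: the corollary is justified in the paper by the discussion immediately preceding it, which takes $k-1$ distinct external points lying on a common tangent line, uses the partitions they induce, and notes that $q>3$ makes every block have size at least two. You have filled in the details (in particular the verification of property (ii) by observing that the shared tangent $t$ is excluded from every $\mathcal P_i$, so distinct partitions never reuse a line) and confirmed the overhead computation, but the underlying construction and constraints match the paper's exactly.
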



\section{Homogeneous partial packings from resolvable configurations and BIBDs}\label{sec:RBIBD}

Recently, in \cite{Gevay}, the notion of resolvable configuration has been introduced. A parallel class in a configuration $\mathcal{C}$ is a set of lines which partition the set of points; a {\em resolution} of $\mathcal C$ is a partition of the set of lines into parallel classes. A configuration $\mathcal C$ is said to be {\em resolvable} if it admits a resolution. A resolution of a $(v_t,b_z)$ resolvable configuration consists of $t$ parallel classes, each of which has size $\frac{v}{z}$. Therefore, if a $(v_t,b_z)$-configuration is  resolvable, then a $k$-partial packing of the set of its $v$ points can be defined for each $k\le 1+t$.

\begin{theorem}\label{th:ResolvableConf}
Let $(v_t,b_z)$ be a resolvable configuration. Then for any $k\leq 1+t$ there exists a $k$-server PIR $[m,v]$-code with
$$
m=v+(k-1)\frac{v}{z},
$$
and storage overhead $1+\frac{k-1}{z}$.
\end{theorem}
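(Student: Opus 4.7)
The plan is to realize the claimed $k$-server PIR code as one coming from a $k$-partial packing, since the paper already records that a $k$-partial packing of a set $X$ of size $s$ and order $r$ yields a $k$-server PIR $[r+s,s]$-code. So the work reduces to producing such a packing from a resolvable $(v_t,b_z)$-configuration.

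First I would fix a resolution of the given configuration, which by hypothesis is a partition of its $b$ lines into $t$ parallel classes, each being a set of $v/z$ pairwise disjoint lines partitioning the point set. Next, I would pick any $k-1$ of these parallel classes, which is legitimate since $k-1 \le t$. I then need to check the two properties of Definition \ref{Def:k-partial} for the set $X$ of $v$ points equipped with these $k-1$ partitions. Property (i) is automatic because each line of a configuration contains $z \ge 2$ points. Property (ii) follows from the defining property of a configuration that two distinct points lie on at most one common line, which is equivalent to saying two distinct lines meet in at most one point; in particular, a line from one parallel class and a line from a different parallel class intersect in at most one point.

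At this stage we have a $k$-partial packing. To extract the order $r$, note that each of the $k-1$ chosen parallel classes has exactly $v/z$ subsets, so $r = (k-1) v/z$. Applying the general observation from the paper, we obtain a $k$-server PIR $[r+v,v]$-code with blocklength
\[
m = v + (k-1)\frac{v}{z},
\]
and therefore storage overhead $1 + (k-1)/z$, as required.

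Nothing here is really hard: the content of the theorem is the bookkeeping translation between resolvable configurations and $k$-partial packings, and the only point requiring minor care is confirming that subsets drawn from different parallel classes meet in at most one point, which reduces cleanly to the definition of a configuration.
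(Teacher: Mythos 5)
Your proof is correct and follows the same route the paper takes: select $k-1$ parallel classes from a resolution, observe they form a $k$-partial packing (since $z\ge 2$ gives property (i) and the configuration axiom that two points lie on at most one line gives property (ii)), count the order as $(k-1)v/z$, and invoke the general fact that a $k$-partial packing of order $r$ yields a $k$-server PIR $[r+s,s]$-code. You merely spell out the two packing properties that the paper leaves implicit.
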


Existence results for symmetric resolvable configurations were investigated in \cite{Buratti1}. Here we list the parameters for which a $v_z$ resolvable configuration exists.
\begin{itemize}
    \item $3\leq z \leq 5$, $v=wz$, $w\geq z$, see \cite[Theorem 3.2]{Buratti1};
    \item $6\leq z \leq 13$, $v=wz$, $w\geq z$, with the following possible exceptions
    $$
    (z,w)\in \{(9, 10),(10, 12),(11, 12),(11, 14),(12, 12),(12, 14),(12, 15),(13, 14),(13, 15)\},
    $$
    see \cite[Theorem 4.7]{Buratti1};
    \item $z\geq 3$, $w \geq z^2$, $v=wz$, see \cite[Corollary 4.6]{Buratti1};
    \item $q$ a prime power, $z\leq q$, $v=zq$, see \cite[Corollary 3.4]{Buratti1}.
\end{itemize}

If a $(v_t,b_z)$-configuration is such that any two distinct points are connected by {\em exactly} one line, then $\mathcal C$ is called a Balanced Incomplete Block Design (BIBD), or a Steiner system. 
In \cite{Fazeli1} it was noticed that one can construct a PIR code from a given Steiner system; see also \cite{RosnesLin}. Here we focus on resolvable Steiner systems, since they give rise to homogeneous partial packings and hence to a large number of distinct PIR codes, each one with a different number of servers. By a counting argument it is easy to see that the number of parallel classes in a resolution of a BIBD is $\frac{v-1}{z-1}$. Therefore, the following results holds.

\begin{theorem}\label{th:RBIBD}
Let $(v_t,b_z)$ be a resolvable BIBD. Then for any $k\leq 1+\frac{v-1}{z-1}$ there exists a $k$-server PIR $[m,v]$-code with
$$
m=v+(k-1)\frac{v}{z}.
$$
\end{theorem}
 
We list here some families of parameters for which there exists a 
$(v_t,b_z)$-configuration which is also a resolvable BIBD; see \cite[Chapter II, Section 7]{Handbook} and \cite{Groppnonsym}.

\begin{itemize}
    \item $z=3$, $v$ such that  $v\equiv 3 \pmod 6$;
    
    \item $z=4$, $v$ such that $v\equiv 4 \pmod {12}$;
    
    \item $z=5$, $v\equiv 5 \pmod {20}$, $v\neq 45,345,465,645$;
    
    \item $z=7$, $v\equiv 7 \pmod {42}$, $v> 294427$;
    
    \item $z=8$, $v \equiv 8 \pmod {56}$, $v>24480$.

\end{itemize}


Also, the following general result holds.
\begin{theorem}\cite[Chapter II, Theorem 7.10]{Handbook}
If $v$ and $z$ are both powers of the same prime, and $z-1$ divides $v-1$, then a $(v_t,b_z)$ resolvable BIBD exists.
\end{theorem}

\section{Families of configurations}

\subsection{Symmetric configurations}\label{Subsec:SymConf}
As already pointed out, any symmetric configuration $v_z$ defines a $(z+1)$-server PIR $[2v,v]$-code with storage overhead equal to $2$. In this section we provide a list of infinite families of symmetric configurations that are known to exist, see \cite{DFGMP,Handbook}. In the following, $q$ is a prime power and $p$ is any prime number.
\begin{center}
    \begin{tabular}{|c|c|c|}\hline
    $v$     &   $z$ &   Conditions \\
    \hline 
    $v$     &   $4$ &   $v\geq 13$\\  \hline 
    $q^2-1$     &   $q$ &   none\\  \hline 
    $p^2-p$     &   $p-1$ &   none\\  \hline 
    $q^2-qs$     &   $q-s$ &   $q>s\geq 0$\\  \hline 
    $q^2-(q-1)s-1$     &   $q-s$ &   $q>s\geq 0$\\ \hline $c(q+\sqrt{q}+1)$     &   $\sqrt{q}+c$ &   $q$ square, $c=2,3,\ldots,q-\sqrt{q}$\\  \hline 
    $2p^2$     &   $p+s$ &   $p+s>0$, $0<s\leq q+1$, $q^2+q+1\leq p$\\  \hline 
    $c(q-1)$     &   $c-\delta$ &  $\delta\geq 0$, $c=\delta,\ldots,b$, $b=q$ if $\delta\geq 1$, $b=\lceil \frac{q}{2}\rceil $ if $\delta=0$ \\ \hline
    $\frac{q(q-1)}{2}$     &   $\frac{q+1}{2}$ &   $q$ odd\\  \hline 
    $\frac{q(q+1)}{2}$     &   $\frac{q-1}{2}$ &   $q$ odd\\  \hline 
    $q^2+q-q\sqrt{q}$     &   $q-\sqrt{q}$ &   $q$ square\\  \hline 
    $q^2-rq-1$     &   $q-r$ &   $q-3\geq r\geq 0$\\ \hline
    $q^2-q-2$     &   $q-1$ &   $q-3\geq r\geq 0$\\ \hline
    $rq-1$     &   $r$ &   $r>0$, $q>r\geq 3$\\ \hline
    $rq-2$     &   $r$ &   $r>0$, $q>r\geq 3$\\ \hline
         \end{tabular}
\end{center}
For small values of $v$ and $z$, more symmetric configurations are known; see \cite[Table 7.13]{Handbook}.
\begin{itemize}
\item    $v\in \{21,23,24,25,26,27,28\}$ and $z=5$;
\item    $v\in \{31,34,35,36,37,38\}$ and $z=6$; 
\item    $v\in \{45,48,49,50\}$     and   $z=7$;
\item    $v\in \{57,63,64\}$     and   $z=8$;
\item    $v\in \{73,78,80\}$     and   $z=9$;
\item   $v\in \{91,98\}$   and  $z=10$;  
\item    $v\in\{133,135\}$  and  $z=12$.
\end{itemize}

\subsection{Non-symmetric configurations}
Non-symmetric configurations allow to obtain PIR codes with storage overhead smaller than $2$. Indeed, let $(v_t,b_z)$ be a configuration with $v\neq b$. Then, up to taking the dual configuration, we can assume $b<v$ and hence this configuration produces a $(t+1)$-server PIR $[v+b,v]$-code, with storage overhead $1+\frac{b}{v}<2$. The existence problem of configurations with $z=3$ is completely solved; see \cite[Theorem 3.1]{Groppnonsym}.
\begin{theorem}\label{thmasym3}
A $(v_t,b_3)$ configuration exists if and only if $vt=3b$ and $v\geq 2t+1$.
\end{theorem}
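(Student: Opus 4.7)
The plan is to separate the biconditional into necessity and sufficiency, handling necessity by two short counting arguments and treating sufficiency as a problem about triangle packings in $K_v$.

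\textbf{Necessity.} Counting flags (point-line incidence pairs) in two ways gives $vt=3b$, since every point lies on $t$ lines and every line contains $3$ points. For the bound $v\ge 2t+1$, fix any point $P$ and consider the $t$ lines $\ell_1,\ldots,\ell_t$ through $P$. Each $\ell_i$ contains exactly two further points besides $P$, and because any two distinct lines meet in at most one point, the sets $\ell_i\setminus\{P\}$ are pairwise disjoint. Therefore $P$ together with these $2t$ points gives $v\ge 2t+1$. Both counts are elementary, so necessity will take only a few lines.

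\textbf{Sufficiency (reformulation).} A $(v_t,b_3)$-configuration on a vertex set $V$ with $|V|=v$ is equivalent to a family $\mathcal T$ of $b$ triangles of $K_v$ such that (i) every edge of $K_v$ lies in at most one element of $\mathcal T$, and (ii) every vertex is contained in exactly $t$ triangles of $\mathcal T$. So the task becomes: for every pair $(v,t)$ with $v\ge 2t+1$ and $3\mid vt$, exhibit such a ``$t$-regular partial triple system'' on $v$ points. The condition $v\ge 2t+1$ is precisely the ``room'' needed to fit $t$ edge-disjoint triangles through each vertex, and $3\mid vt$ is the divisibility needed for a valid triangle count.

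\textbf{Construction plan and main obstacle.} The sufficiency direction is where the real work lies, and it is the main obstacle. My plan is to proceed by cases on $v\bmod 6$. When $v\equiv 1,3\pmod 6$, start from a Steiner triple system $S(2,3,v)$ (which exists by Kirkman's theorem) and remove a carefully chosen family of triples so that the remaining system is $t$-regular; the inequality $v\ge 2t+1$ guarantees this trimming is possible without violating regularity. For the residue classes $v\equiv 0,2,4,5\pmod 6$, appeal instead to maximum packings of $K_v$ by triangles (Sch\"onheim / Hanani), whose leave-graphs are well understood, and again trim to obtain a uniform replication number $t$. The delicate point — and where most of the effort in \cite{Groppnonsym} goes — is the uniform trimming step: ensuring that one can remove triples so that the replication number drops by the same amount at every vertex. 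This is handled by recursive constructions (group-divisible designs, Latin square based amalgamations) together with explicit difference constructions in $\mathbb Z_v$ for small cases, and these building blocks together cover every admissible pair $(v,t)$.
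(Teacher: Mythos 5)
The paper does not prove this theorem at all; it merely states it and points to \cite[Theorem 3.1]{Groppnonsym}, so there is no internal argument to compare against. Your necessity half is correct and self-contained: double-counting flags yields $vt=3b$, and because any two of the $t$ lines through a fixed point $P$ meet only in $P$, the punctured lines $\ell_i\setminus\{P\}$ are pairwise disjoint two-element sets, giving $v\ge 1+2t$. That part is fine and is more than the paper itself offers.

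The sufficiency half, however, is only a plan, and its central step is exactly where the real content lies and where the gap sits. The claim that one can ``trim'' a Steiner triple system (or a Sch\"onheim--Hanani maximum packing) down to a configuration with a prescribed uniform replication number $t$ is not routine: deleting a triple lowers the degree of three vertices at once, so achieving a drop of exactly $r-t$ at \emph{every} vertex amounts to extracting from the packing an $(r-t)$-regular linear triple subsystem --- which is a sub-instance of the very existence problem you are trying to settle. There is no quick reason this is always possible just because $v\ge 2t+1$ and $3\mid vt$. (For $v\equiv 3\pmod 6$ one could invoke Kirkman resolvability and take $t$ parallel classes, but your sketch does not say this, and the other residue classes need genuinely different ideas, e.g.\ group-divisible and difference constructions as in Gropp.) Your own text concedes that this step is ``where most of the effort in \cite{Groppnonsym} goes,'' which is an honest acknowledgment that the sketch defers the proof to the source. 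As written, then, the sufficiency direction is not a proof; you should either fill in the explicit recursive/difference constructions that realize every admissible pair $(v,t)$, or replace that direction by the bare citation, exactly as the paper does.
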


For $z=4,5$, the following results hold; see \cite[Sections 3.2 and 3.4]{Groppnonsym}.
\begin{theorem}\label{thmasym4}
In the following cases, a configuration $(v_t,b_4)$ exists.
\begin{itemize}
\item $v\equiv 4\pmod{12}$, $v>3t+1$ and $vt=4b$;
\item $v\equiv 0\pmod{12}$, $v\geq 3t+1$, $vt=4b$, and $v\not\in E$, where
$$E=\{84,120,132,180,216,264,312,324,372,456,552,648,660,804,852,888\}
;$$
\item $v\equiv 0\pmod{12}$, $v=3t+3$ and $vt=4b$;
\item $t=4s$, $v\geq 3t+1$, $vt=4b$, and $1\leq s\leq 15$, except possibly $s=3$ and $v=38$;
\item $t=6$, $v\geq 20$ even, $b=\frac{3v}{2}$.
\end{itemize}
\end{theorem}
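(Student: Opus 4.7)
The plan is to construct each $(v_t,b_4)$ configuration as a union of $t$ parallel classes of a resolvable structure on $v$ points with blocks of size $4$. The two necessary numerical conditions are evident: $vt=4b$ is a flag count, and $v\ge 3t+1$ follows because through any point pass $t$ lines, each contributing three further points, and these $3t$ further points are pairwise distinct since two lines meet in at most one point. The content of the theorem is that, under certain congruence conditions on $v$, these necessary conditions are also sufficient.

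For $v\equiv 4\pmod{12}$ with $v>3t+1$, I would invoke Hanani's classical theorem that a resolvable $(v,4,1)$-BIBD exists, whose resolution partitions the $v(v-1)/12$ blocks into $(v-1)/3$ parallel classes of $v/4$ blocks each. Since $v>3t+1$ gives $t<(v-1)/3$, selecting any $t$ of these parallel classes produces a $(v_t,b_4)$ configuration with $b=vt/4$. Strict inequality is needed so that at least one parallel class is discarded, avoiding the full BIBD.

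For $v\equiv 0\pmod{12}$, no resolvable $(v,4,1)$-BIBD can exist for divisibility reasons, so I would substitute a resolvable $4$-group divisible design of an appropriate type (for instance $3^{v/3}$ or $6^{v/6}$), augmented by transversal parallel classes to the groups. Taking $t$ of these parallel classes yields the desired configuration whenever $v\ge 3t+1$. The finite exceptional list $E$ collects the values of $v$ for which the GDD-based construction does not provide enough parallel classes across the full range of admissible $t$. For the remaining sub-cases (the boundary $v=3t+3$, the small values $t=4s$ with $1\le s\le 15$, and $t=6$), I would rely on more specialised tools: cyclic or $1$-rotational difference families, PBD closure, recursive singular indirect products, and, for the smallest parameters, direct computer search.

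The main obstacle will be the exceptional list $E$ in the case $v\equiv 0\pmod{12}$ together with the boundary regime $v=3t+3$ in case~3. In these situations every parallel class must be used, so there is no slack and the generic GDD/BIBD machinery runs out of room; one must instead produce tailored designs avoiding repeated pairs, which is exactly where the combinatorial difficulty concentrates. A secondary obstacle is the $t=4s$ range, where one must ensure that the recursive constructions cover every admissible $v\ge 3t+1$ without leaving gaps other than the single well-identified exception $(s,v)=(3,38)$.
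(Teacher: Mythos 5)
The paper does not prove this theorem; it is quoted verbatim from Gropp's survey (reference \cite{Groppnonsym}, Section~3.2), so the comparison is really between your sketch and Gropp's published treatment. Your high-level strategy is the standard one in this corner of design theory and is consistent in spirit with Gropp's: for $v\equiv 4\pmod{12}$ start from a resolvable $(v,4,1)$-BIBD (Hanani/Ray-Chaudhuri/Wilson, whose resolution has $(v-1)/3$ parallel classes of $v/4$ blocks) and keep any $t$ of them; for $v\equiv 0\pmod{12}$, where no such Kirkman system can exist, substitute resolvable $4$-GDDs or transversal-type structures; and for the boundary and small-parameter regimes, use difference families, PBD closure, and direct search. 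The flag count $vt=4b$ and the deficiency bound $v\ge 3t+1$ are correctly justified.

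However, as written this is a plan rather than a proof. You correctly locate the sources of difficulty --- the exceptional set $E$, the no-slack boundary $v=3t+3$, the $t=4s$ range with the isolated gap $(s,v)=(3,38)$ --- but then explicitly concede each of them as an ``obstacle'' and leave it unresolved. Since these are precisely the parts of the statement with real combinatorial content (the generic parallel-class argument only handles the first bullet), the proposal does not yet establish the theorem; it reduces it to the same hard subproblems Gropp had to solve. One small inaccuracy: the claim that the strict inequality $v>3t+1$ is required ``so that at least one parallel class is discarded, avoiding the full BIBD'' is misleading. A Steiner system $S(2,4,v)$ is itself a configuration (pairs lie on \emph{exactly}, hence \emph{at most}, one line), and the $t=4s$ bullet explicitly permits $v=3t+1$. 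The strict inequality in the first bullet reflects how Gropp partitions his cases --- separating proper configurations with positive deficiency from BIBDs --- not a structural necessity of the construction.
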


\begin{theorem}\label{thmasym5}
In the following cases, a configuration $(v_t,b_5)$ exists.
\begin{itemize}
\item $v=4t+4$, $v\equiv 0\pmod{20}$, and $vt=5b$;
\item $v\equiv 5\pmod{20}$, $v\geq 4t+1$, $vt=5b$, and $v\geq 7865$;
\item $t=5s$, $v\geq 4t+1$, $vt=5b$, and $1\leq s\leq 10$, except possibly for the cases $(t,v)\in E$, where
\begin{eqnarray*}
E&=&\{(1,22),(2,42), (2,43), (3,62) (3,63) (4,82), (5,102), (7,142) (9,182), (9,183), (9,185), \\
&&(9,186), (9,187), (9,188), (9,189), (9,190), (9,191), (9,192) \}
.
\end{eqnarray*}
\end{itemize}
\end{theorem}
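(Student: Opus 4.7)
The plan is to realize each listed family as either a subsystem of a resolvable design with blocks of size five, or through a direct algebraic construction, in each case verifying that $vt=5b$, that $v\geq 4t+1$, and that no two points are joined by more than one line.

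For the second family, where $v\equiv 5\pmod{20}$, the natural ingredient is a resolvable $(v,5,1)$-BIBD, which exists in this residue class once $v$ is sufficiently large (the threshold $v\geq 7865$ reflects the current state of Hanani-type existence results for Kirkman systems with block size five). Such a BIBD has $(v-1)/4$ parallel classes of $v/5$ blocks each; selecting any $t\leq (v-1)/4$ of them yields a configuration of type $(v_t,b_5)$ with $b=tv/5$, each point lying on exactly $t$ blocks and no pair being repeated, the last property being inherited from the parent BIBD.

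For the first family, where $v=4t+4$ and $v\equiv 0\pmod{20}$, the parameters force $b=v(v-4)/20$, essentially the maximum block density compatible with a linear space. Since $v\equiv 0\pmod{20}$ lies outside the admissible residues for $(v,5,1)$-BIBDs, I would instead pass through a resolvable group divisible design of type $5^{v/5}$, equivalently a resolvable transversal design $\mathrm{TD}(5,v/5)$ obtained from four mutually orthogonal Latin squares of order $v/5$; the parallel classes of such a design supply the required configuration after identifying its points with those of our structure. The third family, covering small values $t=5s$ with $1\leq s\leq 10$, would be handled case by case via starters, difference families under a cyclic or abelian group action, or ad hoc combinatorial searches, matching the approach of Gropp; the exceptional list $E$ reflects precisely those parameter pairs for which no such construction is presently known.

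The main obstacle will be the first family: packing the maximum possible number of blocks without repeating any pair is delicate, and one must carefully exploit the algebraic structure of the underlying resolvable GDD so that no two parallel classes collectively induce a coincident pair. The small exceptional cases in the third family represent a finite version of the same obstruction, and ruling them out definitively would require either a fresh direct construction or a nonexistence argument tailored to each $(t,v)\in E$.
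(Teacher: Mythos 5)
The theorem is stated in the paper as a citation of Gropp's results (\cite[Sections 3.2 and 3.4]{Groppnonsym}); the paper itself offers no proof, so the comparison here is really against what Gropp's constructions would have to accomplish. Your treatment of the second family is sound: taking any $t\leq (v-1)/4$ parallel classes of a resolvable $(v,5,1)$-BIBD does produce a $(v_t,b_5)$-configuration, and the inequality $v\geq 4t+1$ is exactly the condition $t\leq (v-1)/4$. Your handling of the third family as a finite collection of ad hoc constructions is also consistent with how such lists are typically produced.

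The first family, however, contains a genuine gap. With $v=4t+4$ one has $t=(v-4)/4$, so $t$ grows roughly like $v/4$. A resolvable transversal design $\mathrm{TD}(5,n)$ on $v=5n$ points has at most $n=v/5$ parallel classes, so the number of blocks through a point you can extract from it is bounded by $v/5$, which is strictly less than $(v-4)/4$ as soon as $v>20$. Thus for $v=40$ you would need $t=9$ but $\mathrm{TD}(5,8)$ supplies at most $8$, for $v=60$ you need $t=14$ but $\mathrm{TD}(5,12)$ supplies at most $12$, and the deficit grows linearly. There is also a smaller inconsistency: a GDD of type $5^{v/5}$ and a $\mathrm{TD}(5,v/5)$ (which is a GDD of type $(v/5)^5$) are different objects unless $v=25$, so the phrase ``equivalently'' is misplaced. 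A construction that actually reaches $t=(v-4)/4$ is to delete a point from a Steiner system $S(2,5,v+1)$, which exists since $v\equiv 0\pmod{20}$ forces $v+1\equiv 1\pmod{20}$: after removing the chosen point and the $(v/4)$ blocks through it, every remaining point lies on exactly $(v-4)/4$ blocks, every block has size $5$, and $\lambda\leq 1$ is inherited, giving precisely the required $(v_t,b_5)$-configuration with $b=v(v-4)/20$. Replacing your transversal-design step with this point-deletion argument (or with whatever recursive machinery Gropp actually employs) is necessary to make the first bullet go through.
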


\subsection{Asymptotic results}

It was proven in \cite{bras} that for fixed $t$ and $z$ there exist integers $v_0,\, b_0$ such that for every 
$v\ge v_0$ and $b\ge b_0$ with $vt=bz$,
there exists a $(v_t,b_z)$-configuration.

This means that if we fix the number of server $k$ and an arbitrary fraction of  $\frac{r}{k+1}$ then for $s$ sufficiently large and such that $sr$ is a multiple of $k+1$,  there exists a $k$-server PIR $[m,s]$-code with $m=s(1+\frac{r}{k+1})$ and storage overhead $(1+\frac{r}{k+1})$.

\subsection{Dual configurations from partial packings}\label{sec:dualfrompack}

In the direct product construction, if $G=C_h^\ell$ we obtain an homogeneous partial packing. Since it defines a configuration, we can also consider the dual configuration. Therefore, $k$-server PIR $[m,s]$-codes with the following parameters are obtained:
\begin{itemize}
\item any $h$, any $\ell$:
$$
k=h+1, \qquad s=vh^{\ell-1} \text{ with } 2\le v\le \ell,  \qquad m=s+h^\ell,
$$
and storage overhead $1+\frac{h}{v}$.
\end{itemize}


The same approach can be used for the other constructions that provide homogeneous partial packings. Therefore, we obtain PIR codes with the following parameters:
\begin{itemize}
\item Projective case ($q$ and $N$ as in (a) or (b) of Proposition \ref{resolution}):
$$
k=q+2,\qquad s=v\frac{s(N,q)}{q+1} \text{ with } 2\le v\le (q+1)\frac{L(N,q)}{s(N,q)}, \qquad m=s+s(N,q),
$$
and storage overhead $1+\frac{q+1}{v}$.


\item Affine case, Theorem \ref{th2.6} (any $q$ prime power, $N\ge 2$):
$$
k\le q+1,\qquad s=vq^{N-1} \text{ with } 2\le v\le q^{N-1},\qquad m=s+(k-1)q^{N-1},
$$
and storage overhead $1+\frac{k-1}{v}$.

\item Maximal arcs case (maximal arcs of size $2^{n+n'}-2^{n}+2^{n'}$, for some $0\le n'\le n$):
$$
k=2^{n'}+1,\qquad s=h({2^{n}-2^{n-n'}+1}) \text{ with }
2\le h \le 2^n+1, \qquad 
m=s+2^{n+n'}-2^n+2^{n'}
$$
and storage overhead $1+\frac{2^{n'}}{h}$.

\item Classical unitals case: 
$$
k=q+2,\qquad s=h(q^2-q+1) \text{ with }
2\le h\le q^2, \qquad m=s+q^3+1
$$
and storage overhead $1+\frac{q+1}{h}$.

\item Resolvable BIBD case: if a $(v_t,b_z)$-configuration which is also a resolvable BIBD exists, then the dual construction provide $k$-server PIR $[m,s]$-codes with
$$
k=z+1,\qquad s=h\frac{v}{z} \text{ with }
h\le\frac{v-1}{z-1},\qquad m=s+v 
$$
and storage overhead $1+\frac{z}{h}$.


\end{itemize}

\section{General constructions of $k$-server PIR codes}
In the previous sections we constructed PIR codes whose lengths had a specific form. Here we explicitly construct PIR codes of arbitrary length.

The proof of the following statement is straightforward.
\begin{proposition}\label{reduction} Let $\mathfrak{P}=\{\mathcal P_1,\ldots, \mathcal P_{k-1}\}$ be a $k$-partial packing of a set $X$. Let $Y$ be a subset of $X$ and for each $i=1,\ldots,k-1$ let $\mathcal P_i^Y$ be the partition of $Y$ induced by $\mathcal P_i$. Then
$\mathfrak{P}^Y=\{\mathcal P_1^Y,\ldots, \mathcal P_{k-1}^Y\}$ is a $k$-partial packing of $Y$ if and only if for each $i$ no subset of $\mathcal P_i$ meets $Y$ in precisely one element. In this case, the order of $\mathfrak P^Y$ is less than or equal to that of $\mathfrak P$. 
\end{proposition}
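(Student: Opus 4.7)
The plan is to directly verify the two defining conditions of a $k$-partial packing from Definition \ref{Def:k-partial} for $\mathfrak P^Y$, noting that each $\mathcal P_i^Y$ consists of the nonempty intersections $S\cap Y$ as $S$ ranges over $\mathcal P_i$. Since each $\mathcal P_i$ already partitions $X$, these nonempty intersections do partition $Y$, so $\mathcal P_i^Y$ is at least well-defined as a partition.

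First I would dispose of condition (ii), which is automatic: for $i\neq j$, $S\in\mathcal P_i$ and $T\in\mathcal P_j$, one has $|(S\cap Y)\cap(T\cap Y)|\le |S\cap T|\le 1$ directly from condition (ii) for $\mathfrak P$. Thus the only content of the equivalence lives in condition (i), the size-at-least-two requirement. The blocks of $\mathcal P_i^Y$ are precisely the sets $S\cap Y$ with $S\in\mathcal P_i$ and $S\cap Y\neq\emptyset$. Such a block has size at least two if and only if $|S\cap Y|\neq 1$, which gives the stated equivalence: condition (i) fails for $\mathcal P_i^Y$ exactly when some $S\in\mathcal P_i$ meets $Y$ in precisely one element.

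Finally, for the order bound, I would observe that the number of nonempty blocks of $\mathcal P_i^Y$ is at most $|\mathcal P_i|$, since the map $S\mapsto S\cap Y$ is at most one-to-one on nonempty intersections. Summing over $i=1,\ldots,k-1$ gives that the order of $\mathfrak P^Y$ is at most the order of $\mathfrak P$. There is no real obstacle here: the entire statement reduces to unwinding the definitions and keeping track of which intersections survive, with condition (ii) inherited trivially and condition (i) being the only potential failure mode.
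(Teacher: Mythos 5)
Your argument is correct and is exactly the straightforward unwinding of Definition~\ref{Def:k-partial} that the paper invokes (the paper gives no proof, stating only that it is ``straightforward''): condition (ii) is inherited because $(S\cap Y)\cap(T\cap Y)\subseteq S\cap T$, condition (i) holds precisely when no block of any $\mathcal P_i$ meets $Y$ in a single point, and the order bound follows since distinct blocks of $\mathcal P_i$ yield distinct (indeed disjoint) nonempty intersections with $Y$.
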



As an illustration, we apply Proposition \ref{reduction} to the partial packings described in Section \ref{2.3}.

Let $q^N$ be the least prime power such that $k\le 1+s(N-1,q)-q^{N-1}$ and $s\le q^N$. The condition on $k$ allows to  construct a $k$-partial packing $\mathfrak{P}$ according to Theorem \ref{th2.5}, in which the parallelism classes of the lines belonging to a fixed hyperplane $H$ are avoided.

If in addition $s\ge 2q^{N-1}$, then one can fix a subset $Y$ of ${\rm{AG}}(N,q)$ with size $s$ that contains two hyperplanes parallel to $H$. Then clearly every line belonging to the partitions of $\mathfrak P$ meets $Y$ in at least two points, and $\mathfrak{P}^Y$ is a $k$-partial packing.

\begin{theorem} For integers $k$ and $s$, let $q^N$ be the least prime power such that $k\le 1+s(N-1,q)-q^{N-1}$ and $2q^{N-1}\le s\le q^N$. Then there exists a $k$-server PIR $[m,s]$-code with
$$
m=s+(k-1)q^{N-1}
$$
and storage overhead $1+\frac{(k-1)q^{N-1}}{s}$. 
\end{theorem}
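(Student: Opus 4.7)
The plan is to combine the $k$-partial packing of ${\rm{AG}}(N,q)$ supplied by Theorem~\ref{th2.5} with the restriction machinery of Proposition~\ref{reduction}, following the roadmap already laid out in the two paragraphs preceding the statement. Specifically, I would fix a hyperplane $H$ of ${\rm{AG}}(N,q)$ and build a $k$-partial packing $\mathfrak{P}$ of ${\rm{AG}}(N,q)$ from $k-1$ parallelism classes chosen so that each line of $\mathfrak{P}$ meets every hyperplane parallel to $H$ in a single point; the numerical hypothesis $k\le 1+s(N-1,q)-q^{N-1}$ is exactly what guarantees that enough parallelism classes remain available after discarding those singled out by $H$.

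Next, using $s\ge 2q^{N-1}$, I would form a subset $Y\subseteq {\rm{AG}}(N,q)$ with $|Y|=s$ that contains two hyperplanes $H_1,H_2$ parallel to $H$, filling the remaining $s-2q^{N-1}$ positions arbitrarily. Because every line in $\mathfrak{P}$ meets each of $H_1,H_2$ in a single point, every block of $\mathfrak{P}$ intersects $Y$ in at least two points. This is precisely the hypothesis of Proposition~\ref{reduction}, so the induced restriction $\mathfrak{P}^Y$ is a $k$-partial packing of $Y$ of order $(k-1)q^{N-1}$, since each of the $k-1$ parallelism classes contributes $q^{N-1}$ lines and none of them collapses to a singleton after restriction. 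The basic correspondence stated right after Definition~\ref{Def:k-partial} then delivers a $k$-server PIR $[s+(k-1)q^{N-1},\,s]$-code with storage overhead $1+\frac{(k-1)q^{N-1}}{s}$.

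The only delicate step is the direction/dimension count showing that enough parallelism classes remain after removing those whose blocks would shrink to singletons under restriction to $Y$; this is where the precise inequality on $q^N$ is used, and it is exactly the check that bridges Theorem~\ref{th2.5} and Proposition~\ref{reduction} in this specific construction.
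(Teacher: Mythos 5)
Your proof follows exactly the roadmap the paper itself lays out in the two paragraphs preceding the statement: choose a hyperplane $H$, select $k-1$ parallelism classes whose lines are transversal to $H$ (which is possible thanks to the hypothesis on $k$), take a subset $Y$ of size $s$ containing two hyperplanes parallel to $H$ so that every selected line meets $Y$ in at least two points, and invoke Proposition~\ref{reduction} together with the basic correspondence between $k$-partial packings and PIR codes. This is essentially the same argument as in the paper, including the observation that the order of the restricted packing stays at $(k-1)q^{N-1}$ because no line degenerates to a singleton or empty block.
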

The best case is clearly when $s$ is close to a prime power. However, something very general can be stated.
\begin{corollary}
For integers $k$ and $s$, let $q^N$ be the least prime power such that $k\le 1+s(N-1,q)-q^{N-1}$ and $2q^{N-1}\le s\le q^N$. Then there exists a $k$-server PIR $[m,s]$-code with storage overhead $O$ with
$$
1+\frac{k-1}{q}\le O \le  1+\frac{k-1}{2}.
$$

\end{corollary}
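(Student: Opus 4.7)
The plan is to derive this corollary directly from the previous theorem, which already produces a $k$-server PIR $[m,s]$-code with storage overhead $O=1+\frac{(k-1)q^{N-1}}{s}$ under the same hypotheses on $q$, $N$, $k$, and $s$. Thus the only work is to sandwich the fraction $\frac{(k-1)q^{N-1}}{s}$ between $\frac{k-1}{q}$ and $\frac{k-1}{2}$ using the two inequalities that $s$ must satisfy, namely $2q^{N-1}\le s\le q^N$.

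First I would apply the upper bound $s\le q^N$, which gives
$$
\frac{(k-1)q^{N-1}}{s}\ge \frac{(k-1)q^{N-1}}{q^N}=\frac{k-1}{q},
$$
hence $O\ge 1+\frac{k-1}{q}$. Then I would apply the lower bound $s\ge 2q^{N-1}$, which yields
$$
\frac{(k-1)q^{N-1}}{s}\le \frac{(k-1)q^{N-1}}{2q^{N-1}}=\frac{k-1}{2},
$$
hence $O\le 1+\frac{k-1}{2}$. Combining these two inequalities delivers the claimed range.

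There is no real obstacle here: the content of the corollary is a packaging of the previous theorem in terms of the extreme values permitted for $s$, and the existence of the code, together with the exact expression for its storage overhead, has already been established in the theorem. The only mild care needed is to verify that the hypotheses ensuring the existence of a suitable $q^N$ are carried over verbatim, so that one is entitled to invoke the theorem in the first place.
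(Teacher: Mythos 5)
Your proof is correct and follows exactly the intended route: apply the preceding theorem to get $O = 1 + \frac{(k-1)q^{N-1}}{s}$, then substitute the extreme values $s = q^N$ and $s = 2q^{N-1}$ to obtain the two bounds. This is precisely what the paper has in mind (the corollary is stated without proof as an immediate consequence of the theorem).
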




\section{Conclusions}
In recent years, finding $k$-server PIR codes with a small blocklength for a given dimension has become an important research challenge.
Let $P(s,k)$ denote the minimum value of $m$ for which a $k$-server PIR $[m,s]$-code exists.

In this paper several upper bounds on $P(s,k)$ have been obtained through the notions of $k$-partial packings and configurations. Here we summarize our result on $P(s,k)$, taking into account that the function $P$ is strictly increasing in both variables $s$ and $k$, as the following propagation rules show.
\begin{proposition}\cite[Lemmas 13 and 14]{Fazeli2}
\begin{itemize}
    \item[(i)] $P(s,k)\leq P(s,k+1)-1$;
    \item[(ii)] if $k$ is odd, then $P(s,k)= P(s,k+1)-1$;
    \item[(iii)] $P(s,k)\leq P(s+1,k)-1$.
\end{itemize}
\end{proposition}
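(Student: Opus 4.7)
All column arithmetic is carried out over $\mathbb{F}_2$, and I write $R_a^{(i)}$ for the $a$-th recovery set corresponding to the $i$-th coordinate. Each inequality will be obtained by modifying an optimal PIR code via a single column operation. For (i), start with an optimal $(k+1)$-server PIR $[P(s,k+1),s]$-code with generator matrix $G$ and delete an arbitrary column. For each coordinate $i$, at most one of the $k+1$ disjoint recovery sets contained the deleted column, so at least $k$ of them survive and give a $k$-server PIR $[P(s,k+1)-1,s]$-code.

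For (ii), part (i) already gives one direction, so I need $P(s,k+1)\le P(s,k)+1$ when $k$ is odd. I take an optimal $k$-server PIR $[P(s,k),s]$-code with columns $G_1,\ldots,G_{P(s,k)}$, append a parity column $c:=G_1+\cdots+G_{P(s,k)}$, and for each $i$ form the extra recovery set $R_{k+1}^{(i)}:=\{c\}\cup(\{1,\ldots,P(s,k)\}\setminus U_i)$, where $U_i:=\bigcup_{a=1}^k R_a^{(i)}$. A short computation over $\mathbb{F}_2$ shows that its column sum equals $\sum_{j\in U_i}G_j=k\,e_i=e_i$ (using that $k$ is odd), while disjointness from $R_1^{(i)},\ldots,R_k^{(i)}$ is automatic.

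For (iii), I start with an optimal $k$-server PIR $[P(s+1,k),s+1]$-code with generator matrix $G$ and produce a $k$-server PIR $[P(s+1,k)-1,s]$-code by simultaneously puncturing one column and shortening the dimension. Pick any recovery set $R$ for $e_{s+1}$; since its column sum is $e_{s+1}$, some $c_0\in R$ has $(s+1)$-th entry equal to $1$. For every $i\le s$ with $G_{c_0,i}=1$, I apply the row operation that replaces row $i$ with row $i$ + row $s+1$. The resulting matrix $G^*$ satisfies $G^*_{c_0}=e_{s+1}$, and since these row operations fix each $e_i$ with $i\le s$, the original sets $R_1^{(i)},\ldots,R_k^{(i)}$ continue to sum to $e_i$ in $G^*$. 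Deleting column $c_0$ together with row $s+1$ then yields an $s\times(P(s+1,k)-1)$ matrix $G^{**}$; for each $i\le s$ the sets $R_a^{(i)}\setminus\{c_0\}$ are disjoint, nonempty (since $G^*_{c_0}=e_{s+1}\ne e_i$ rules out $R_a^{(i)}=\{c_0\}$), and sum to $e_i$ in $\mathbb{F}_2^s$ because $e_{s+1}$ projects to $0$.

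The delicate step is clearly (iii), where both the length and the dimension must decrease together. The key trick is to choose row operations that preserve $e_1,\ldots,e_s$: these can manufacture a column equal to $e_{s+1}$ without disturbing the recovery structure for the first $s$ coordinates, after which the final puncture-and-drop-last-row step becomes transparent.
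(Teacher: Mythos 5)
The paper itself does not prove this proposition; it is cited verbatim from \cite{Fazeli2} (Lemmas 13 and 14). Your blind reconstruction is correct and follows the same standard puncturing / parity-extension / shortening route as the original: part (i) deletes a column and uses disjointness to keep $k$ surviving recovery sets per coordinate, part (ii) appends the all-columns parity bit and builds the complementary recovery set (the parity of $k$ being what makes $k\,e_i = e_i$ work), and part (iii) normalizes one column to $e_{s+1}$ by row operations that fix $e_1,\ldots,e_s$ and then simultaneously punctures that column and drops the last row. All the incidental checks you include are the right ones and are carried out correctly: the surviving recovery sets keep the column rank equal to the dimension after puncturing, the row-operation matrix acts trivially on $e_1,\ldots,e_s$, the set $R_a^{(i)}\setminus\{c_0\}$ stays nonempty because $G^*_{c_0}=e_{s+1}\ne e_i$, and the projection kills the spurious $e_{s+1}$ contribution. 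Nothing is missing.
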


In the following table $q$ denotes a prime power, whereas $N$ and $a_i$ any  integer greater than $1$. The integer $k$ is always assumed to be greater than $2$.
\begin{table}[H]
\caption{New upper bounds on $P(s,k)$.}\label{table1}
\begin{center}
    \begin{tabular}{c|c|c}
    $s$     & $k$ & $P(s,k)\le$ \\
    \hline \hline 
 $a_1\cdot a_2\cdots a_{c}$ & $\le c+1$ & $s(1+\frac{1}{a_1}+\cdots+\frac{1}{a_{k-1}})$ \\ \hline  
 $2^{N+1}-1$, $N$ odd & $ \le 2^N$ & $s(1+\frac{k-1}{3})$ \\ \hline  
 $\frac{q^{N+1}-1}{q-1}$, $N=2^{i+1}-1$ & $\le 1+\frac{q^N-1}{q-1}$ & $s(1+\frac{k-1}{q+1})$ \\ \hline  
 $q^N$  & $ \le 1+\frac{q^N-1}{q-1}$ & $s(1+\frac{k-1}{q})$ \\ \hline  
  $2^{n+n'}-2^n+2^{n'}$, $0\le n'\le n$  & $\le 2^{n}+2$ & $s(1+\frac{k-1}{2^{n'}})$\\ \hline  
    $q^3+1$ & $ \le q^2+1$ & $s(1+\frac{k-1}{q+1})$\\ \hline  
    $\frac{q^2-q}{2}$  & $ \le q+1$& $ s+(k-1)q$ \\ \hline  
      $\equiv 3 \pmod 6$ & $\le 1+\frac{s-1}{2}$ & $s(1+\frac{k-1}{3})$\\ \hline  
    $\equiv 4 \pmod {12}$    & $\le 1+\frac{s-1}{3}$& $s(1+\frac{k-1}{4})$\\ \hline  
    $\equiv 5 \pmod {20}$, $\neq 45,345,465,645$     &  $\le 1+\frac{s-1}{4}$& $s(1+\frac{k-1}{5})$ \\ \hline  
    $\equiv 7 \pmod {42}$   $>294427$   & $\le 1+\frac{s-1}{6}$& $s(1+\frac{k-1}{7})$\\ \hline  
    $\equiv 8 \pmod {56}$, $>24480$       & $\le 1+\frac{s-1}{7}$& $s(1+\frac{k-1}{8})$\\ \hline  
    $sh$ multiple of $k+1$, sufficiently large       & arbitrary & $s(1+\frac{h}{k+1})$\\ \hline  
$s\ge 13$       & $3$ & $2s$\\ \hline  
 \end{tabular}
\end{center}\end{table}

Finally, in the next table we report the best known bounds for $P(s, k)$ for small values of $s$ and $k$. In particular, the improvements 
over the existing literature that are provided by our constructions are printed in bold. In these cases, we state the Section (briefly S), or Theorem (briefly T) from which the improvement is obtained. Also, we use PR to denote the improvements that are obtained using the constructions of this paper together with the above-mentioned propagation rules.

\begin{table}[H]
\caption{Best known bounds for $P(s, k)$ for small values of $s$ and $k$.}\label{table2}
\begin{center}\scalebox{0.93}{
	\begin{tabular}{c||c|c|c|c|c|c|c|c|c|c|c|c|c|}
			\textbf{k} $\setminus$ \textbf{t} & \multicolumn{2}{c|}{\textbf{2}} & \multicolumn{2}{c|}{\textbf{3}} & \multicolumn{2}{c|}{\textbf{4}} &
			\multicolumn{2}{c|}{\textbf{5}} &
			\multicolumn{2}{c|}{\textbf{6}} &
			\multicolumn{2}{c|}{\textbf{7}} \\
			\hline \hline
			\textbf{2} & $3^*$ & 1.50 & $5^*$ & 2.50 & $6^*$ & 3.00 & $8^*$ & 4.00 & $9^*$ & 4.50 & $11^*$ & 5.50 \\
			\hline
			\textbf{3} & $4^*$ & 1.33 & $6^*$ & 2.00 & $7^*$ & 2.33 & $10^*$ & 3.33 & $11^*$ & 3.67 & $13^*$ & 4.33 \\
			\hline
			\textbf{4} & $5^*$ & 1.25 & 8 & 2.00 & 9 & 2.25 & 11 & 2.75 & $12^*$ & 3.00 & 14 & 3.50 \\
			\hline
			\textbf{5} & $6^*$ & 1.20 & 9 & 1.80 & 10 & 2.00 & 12 & 2.40 & 13 & 2.60 & 17 & 3.40 \\
			\hline
			\textbf{6} & $7^*$ & 1.17 & 10 & 1.67 & 11 & 1.83 & 13 & 2.17 & 14 & 2.33 & 18 & 3.00 \\
			\hline
			\textbf{7} & $8^*$ & 1.14 & 12 & 1.71 & 13 & 1.86 & 14 & 2.00 & 15 & 2.14 & 20 & 2.86 \\
			\hline
			\textbf{8} & $9^*$ & 1.13 & 13 & 1.63 & 14 & 1.75 & 17 & 2.13 & 18 & 2.25 & 22 & 2.75 \\
			\hline
			\textbf{9} & $10^*$ & 1.11 & 14 & 1.56 & 15 & 1.67 & 19 & 2.11 & 20 & 2.22 & 24 & 2.67 \\
			\hline
			\textbf{10} & $11^*$ & 1.10 & 15 & 1.50 & 16 & 1.60 & 20 & 2.00 & 21 & 2.10 & 25 & 2.50 \\
			\hline
			\textbf{11} & $12^*$ & 1.09 & $\textbf{17}^{PR}$ & \textbf{1.55} & $\textbf{18}^{PR}$ & \textbf{1.64} & 24 & 2.18 & 25 & 2.27 & 36 & 3.27 \\
			\hline
			\textbf{12} & $13^*$ & 1.08 & $\textbf{18}^{T.\ref{thmasym4}}$ & \textbf{1.50} & $\textbf{20}^{PR}$ & \textbf{1.67} & 25 & 2.08 & 26 & 2.17 & 38 & 3.17 \\
			\hline
			\textbf{13} & $14^*$ & 1.08 & 21 & 1.62 & 22 & 1.69 & 26 & 2.00 & 27 & 2.08 & $\textbf{39}^{T. \,\ref{thmasym3}}$ & \textbf{3.00}\\
			\hline
			\textbf{14} & $15^*$ & 1.07 & 22 & 1.57 & 23 & 1.64 & 28 & 2.00 & 29 & 2.07 & 42 & 3.00 \\
			\hline
			\textbf{15} & $16^*$ & 1.07 & 23 & 1.53 & 24 & 1.60 & $\textbf{30}^{S.\,\ref{Subsec:SymConf}}$ & \textbf{2.00} & $\textbf{31}^{PR}$ & \textbf{2.07} & 43 & 2.87 \\
			\hline
			\textbf{16} & $17^*$ & 1.06 & 24 & 1.50 & 25 & 1.56 & $\textbf{32}^{T.\, \ref{th:RBIBD}}$ & \textbf{2.00} & $\textbf{33}^{PR}$ & \textbf{2.06} & 44 & 2.75 \\
			\hline
			\textbf{17} & $18^*$ & 1.06 & $\textbf{26}^{PR}$ & \textbf{1.53} & $\textbf{27}^{PR}$ & \textbf{1.59} & $\textbf{33}^{PR}$ & \textbf{1.94} & $\textbf{34}^{PR}$ & \textbf{2.00} & 45 & 2.65 \\
			\hline
			\textbf{18} & $19^*$ & 1.06 & $\textbf{27}^{T.\ref{groupspackings}}$ & \textbf{1.50} & $\textbf{28}^{PR}$ & \textbf{1.56} & $\textbf{34}^{PR}$ & \textbf{1.89} & $\textbf{35}^{PR}$ & \textbf{1.94} & 46 & 2.56 \\
			\hline\textbf{19} & $20^*$ & 1.05 & $\textbf{28}^{PR}$ & \textbf{1.47} & $\textbf{29}^{PR}$ & \textbf{1.53} & $\textbf{35}^{PR}$ & \textbf{1.84} & $\textbf{36}^{PR}$ & \textbf{1.89} & 47 & 2.47 \\
			\hline
			\textbf{20} & $21^*$ & 1.05 & $\textbf{29}^{T.\,\ref{groupspackings}}$ & \textbf{1.45} & $\textbf{30}^{PR}$ & \textbf{1.50} & $\textbf{36}^{T.\,\ref{thmasym5}}$ & \textbf{1.80} & $\textbf{37}^{PR}$ & \textbf{1.85} & 48 & 2.40 \\
			\hline
			\textbf{21} & $22^*$ & 1.05 & 31 & 1.48 & 32 & 1.52 & 41 & 1.95 & 42 & 2.00 & 49 & 2.33 \\
			\hline
			\textbf{22} & $23^*$ & 1.05 & 32 & 1.45 & 33 & 1.50 & $\textbf{42}^{PR}$ & \textbf{1.91} & $\textbf{43}^{PR}$ & \textbf{1.95} & 50 & 2.27 \\
			\hline
			\textbf{23} & $24^*$ & 1.04 & 33 & 1.43 & 34 & 1.48 & $\textbf{43}^{PR}$ & \textbf{1.87} & $\textbf{44}^{PR}$ & \textbf{1.91} & 51 & 2.22 \\
			\hline
			\textbf{24} & $25^*$ & 1.04 & 34 & 1.42 & 35 & 1.46 & $\textbf{44}^{PR}$ & \textbf{1.83} & $\textbf{45}^{PR}$ & \textbf{1.88} & 52 & 2.17 \\
			\hline
			\textbf{25} & $26^*$ & 1.04 & 35 & 1.40 & 36 & 1.44 & $\textbf{45}^{S.\, \ref{sec:RBIBD}}$ & \textbf{1.80} & $\textbf{46}^{PR}$ & \textbf{1.84} & 53 & 2.22 \\
			\hline
			\textbf{26} & $27^*$ & 1.04 & $\textbf{37}^{PR}$ & \textbf{1.42} & $\textbf{38}^{PR}$ & \textbf{1.46} & $\textbf{46}^{PR}$ & \textbf{1.77} & $\textbf{47}^{PR}$ & \textbf{1.81} & 54 & 2.08 \\
			\hline
			\textbf{27} & $28^*$ & 1.04 & $\textbf{38}^{PR}$ & \textbf{1.41} & $\textbf{39}^{PR}$ & \textbf{1.44} & $\textbf{47}^{PR}$ & \textbf{1.74} & $\textbf{48}^{PR}$ & \textbf{1.78} & 55 & 2.04 \\
			\hline
			\textbf{28} & $29^*$ & 1.04 & $\textbf{39}^{T.\,\ref{groupspackings}}$ & \textbf{1.39} & $\textbf{40}^{PR}$ & \textbf{1.43} & $\textbf{48}^{PR}$ & \textbf{1.71} & $\textbf{49}^{PR}$ & \textbf{1.75} & 56 & 2.00 \\
			\hline
			\textbf{29} & $30^*$ & 1.03 & $\textbf{40}^{PR}$ & \textbf{1.38} & $\textbf{41}^{PR}$ & \textbf{1.41} & $\textbf{49}^{PR}$ & \textbf{1.69} & $\textbf{50}^{PR}$ & \textbf{1.72} & 57 & 1.97 \\
			\hline
			\textbf{30} & $31^*$ & 1.03 & $\textbf{41}^{T.\,\ref{groupspackings}}$ & \textbf{1.37} & $\textbf{42}^{PR}$ & \textbf{1.40} & $\textbf{50}^{T. \,\ref{thmasym4}}$ & \textbf{1.67} & $\textbf{51}^{PR}$ & \textbf{1.70} & 58 & 1.93 \\
			\hline
		\end{tabular}}
\end{center}
\end{table}

\section*{Acknowledgments}
This research was partially supported  by the Italian National Group for Algebraic and Geometric Structures and their Applications (GNSAGA - INdAM). The first author is funded by the project ``Strutture Geometriche, Combinatoria e loro Applicazioni'' (Fondo Ricerca di Base, 2019, University of Perugia). The third author is funded by
the project ``VALERE: VAnviteLli pEr la RicErca" of the University of Campania ``Luigi Vanvitelli". 
The authors would like to thank Marco Buratti for his helpful suggestions.

\end{document}